\theoremstyle{remark}
\theoremstyle{definition}
\newtheorem{definition}{Definition}[section]
\newtheorem{theorem}{Theorem}[section]
\newtheorem{corollary}{Corollary}[theorem]
\newtheorem{prop}{{\bf Proposition}}[section]
\newcommand\reallywidehat[1]{%
\savestack{\tmpbox}{\stretchto{%
  \scaleto{%
    \scalerel*[\widthof{\ensuremath{#1}}]{\kern-.6pt\bigwedge\kern-.6pt}%
    {\rule[-\textheight/2]{1ex}{\textheight}}
  }{\textheight}%
}{0.5ex}}%
\stackon[1pt]{#1}{\tmpbox}%
}
\newcommand*{\defeq}{\mathrel{\vcenter{\baselineskip0.5ex \lineskiplimit0pt
                     \hbox{\scriptsize.}\hbox{\scriptsize.}}}%
                     =}
\renewcommand{\thefootnote}{\fnsymbol{footnote}}
\def\appendix#1{\addtocounter{section}{1}\setcounter{equation}{0}
\renewcommand{\thesection}{\Alph{section}}
\section*{Appendix \thesection\protect\indent \parbox[t]{11.15cm}{#1}}
\addcontentsline{toc}{section}{Appendix \thesection\ \ \ #1}}
\font\mybb=msbm10 at 11pt
\def\bb#1{\hbox{\mybb#1}}
\def\bR {\bb{R}}
\def\bC {\bb{C}}
\newcommand\fe{{\mathbf e}}
\newcommand{\bea}{\begin{eqnarray}}
\newcommand{\eea}{\end{eqnarray}}
\begin{document}

\begin{center}
\vspace*{-1.0cm}
\begin{flushright}
\end{flushright}


\vspace{2.0cm} {\Large \bf Geometry and symmetries of null G-structures} \\[.2cm]

\vskip 2cm
  G.  Papadopoulos$^*$\footnotetext{*On study leave from the Department of Mathematics, King's College London, Strand, London WC2R 2LS, UK.}
\\
\vskip .6cm


\begin{small}
\textit{Laboratoire de Physique Theorique de l'Ecole Normale Superi\'eure
\\
24 rue Lhomond
\\
75231 Paris Cedex, France}\\
\texttt{george.papadopoulos@kcl.ac.uk}
\end{small}
\\*[.6cm]

\end{center}

\vskip 2.5 cm

\begin{abstract}
\noindent
We present a definition of  null G-structures on Lorentzian  manifolds  and investigate  their geometric properties.  This definition includes   the Robinson structure on 4-dimensional black holes as well as the null structures that appear
in all supersymmetric solutions of supergravity theories. We also identify the  induced geometry
on some null hypersurfaces   and  that on the  orbit spaces of  null geodesic congruences in such Lorentzian manifolds.
We give the algebra of diffeomorphisms that preserves a null G-structure and demonstrate that in some cases it interpolates between the  BMS  algebra of an asymptotically flat spacetime and the Lorentz symmetry algebra of  a Killing horizon.


\end{abstract}

\newpage

\renewcommand{\thefootnote}{\arabic{footnote}}

\section{Introduction}

 It has been known for sometime that many    solutions of 4-dimensional gravity theories admit a Lorentzian holomorphic structure.
Such solutions include black hole solutions, like Schwarzchild, Reissner–Nordstr\"om and Kerr, and other solutions like G\"odel.
Such a structure has originally been  introduced by Robinson and has been used to find new solutions like the Robinson-Trautman  solution.
Apart from the holomorphic structure another characteristic of Robinson manifolds is the existence of a nowhere vanishing
null vector field $X$, which  may not be Killing, whose integral curves are null geodesics.
It can also be shown   that the orbit space of the null geodesic congruence generated by $X$
admits a Cauchy-Riemann (CR) structure, for a review and a  proof of some of the above statements  see  \cite{nt}.  CR is the structure inherited on  hypersurfaces in complex manifolds from the ambient complex structure of the whole space and has extensively been   investigated in the literature, see e.g. \cite{ntx}-\cite{gt}.

Many of the features of Robinson manifolds like the holomorphic structure and the existence of a null vector field are reminiscent of the properties   of some supersymmetric backgrounds, see \cite{ugjggprev} for a review and references within. Moreover
the systematic exploration of supersymmetric backgrounds has revealed many new  Lorentzian geometries.  As we shall demonstrate  some of them generalize that of Robinson manifolds.  Nevertheless both the Robinson structure   and those that appear on supersymmetric solutions,  particularly those for which one of bilinears of a Killing spinor is a null vector field, belong to the same family.  This is because of  the existence of a null nowhere vanishing vector field $X$ on the spacetime. We refer to all of these structures as null G-structures\footnote{Typically null structures have been investigated on hypersurfaces of a spacetime with a null normal vector field. The null structures investigated here  will be on the whole spacetime.}.

The investigation of geometry  of  supersymmetric backgrounds has given a new perspective  in the description of geometry of null structures. This is because it is based on the properties of {\it globally defined} fundamental forms on the spacetime.  As an approach it is similar to that taken for the description of geometry  of n-dimensional Riemannian manifolds with a G-structure for $G\subseteq O(n)$. However there are some  key differences. One is  that for null G-structures  all the fundamental forms   are null.  Another is that the null structure is a conformally invariant concept and so any geometric description of the structure is required to accommodate  this  in its formulation.  These have consequences in the description of the spacetime geometry and in particular the patching conditions of various tensor like objects  that appear on  spacetime.   Despite
pointing to  a new way of investigating null G-structures, the null structures that appear in supersymmetric backgrounds are special and they must be further generalized to apply to non-supersymmetric solutions.

One of the objectives of this paper is to give the definition of a null G-structure and develop the tools to investigate
the geometry of associated Lorentzian manifolds. In particular, the structure of the tangent and co-tangent bundles of manifolds with a null G-structure will be examined in detail. The emphasis will be on the construction of the fundamental forms of a null G-structure and the exploration of their properties. This will generalize the null structures beyond  the Robinson structure and those that
appear in the context of supersymmetric solutions.   One of the expectations is that  the null G-structures developed here will be sufficiently general to describe the geometry of most of the solutions in four and higher dimensions that have appeared in recent years in the context of string and M-theories.

Under certain conditions, a spacetime with a null G-structure admits a null geodesic congruence generated by a vector field $X$.
This is always the case for spacetimes with a Robinson structure. This will be explored further in the context of null G-structures. The geometry induced from the spacetime with a null G-structure on a null hypersurface ${\cal H}$ transversal to this null geodesic congruence as well as that on the space of orbits ${\cal M}$ of the null geodesics will be determined.  These have applications in the context of black holes and asymptotically flat spacetimes as
 such hypersurfaces are identified with
horizons and    the asymptotic null infinity, respectively.  We shall also give the conditions how to construct a Lorentzian manifold
with null G-structure from data on ${\cal M}$.

Furthermore, we shall examine the local symmetries of null G-structures.  In particular, the symmetry Lie algebras of  a variety of null G-structures will be computed.  These are typically infinite dimensional and  are closely related to the Lie algebras of path groups. Moreover, we shall demonstrate for a certain class of spacetimes the symmetry Lie algebras of some null G-structures interpolate between
the Lorentz symmetry of a Killing horizon hypersurface and the Bondi-Metzner-Sachs (BMS)  algebra \cite{bondi, sachs} of asymptotic null infinity.

In addition to developing the general theory of the null G-structures, we shall investigate in detail the null G-structures  associated with the groups
$SO(n-2)$, $U(k)$ and $SU(k)$, $n=2k+2$,  on a n-dimensional spacetime $M$.  We shall demonstrate that the null structure associated to  $U(k)$  can be identified with that of almost Robinson manifolds. We also give some examples of Lorentzian spacetimes
with a null G-structure which include black holes in all dimensions and brane solutions.

This paper is organized as follows. In section two, we give the definition of a null G-structure and investigate the properties
of the tangent and contangent bundles of the associated spacetime. In section three, we explore the properties of the fundamental forms
of general null G-structures.  We also examine  the induced geometry on null hypersurfaces transversal to null geodesic congruences
generated by the structure as well as that on orbit spaces of null geodesics.  In addition, we  give the definition of invariance of a null G-structure and
present examples of spacetimes with a null G-structure.  In section four, we investigate in detail the geometry of null G-structures
based on the groups $U(k)$ and $SU(k)$ and clarify the relation of the former to the Robinson structure.  In section five, we
investigate the symmetries of null G-structures with emphasis on those related to the groups $SO(n-2)$ and $U(k)$. In particular, we
explain how on certain spacetimes the Lie algebra of spacetime diffeomorphisms which preserves a null G-structure interpolates
between the Lorentz  algebra of a Killing horizon and the BMS  algebra of asymptotic null infinity, and in section six,
we present our conclusions.

\section{Null  $G$-structures}

\subsection{Definition of null $G$-structures}

As it has been mentioned in the introduction, null G-structures are characterised by the existence of a nowhere vanishing
 vector field, considered up to a conformal rescaling,  which defines a null direction on the spacetime. Because of this let us consider
the isotropy group $H_L\subset SO^+(n-1,1)$  of a null line $L$ passing through the origin  in $\bR^{n-1,1}$.  This is spanned by the matrices
\bea
H_L=\left\{ \begin{pmatrix}\ell & -{1\over2}\ell^{-1} v^2 & v^t \\
                  0& \ell^{-1}& {\bf 0}^t\\
                 {\bf 0}& -{1\over\ell} A v& A\end{pmatrix}\vert v\in \bR^{n-2}, A\in SO(n-2), \ell\in \bR-\{0\}\right\}~,
\label{isoline}
\eea
while the isotropy group $H_X$ of  a non-vanishing null vector $X$ is
\bea
H_X=\{(\ell, v, A)\in H_L\vert \ell=1\}~.
\eea
Let us denote with $H_L^+$ the subgroup of $H_L$ for which $\ell>0$. $H^+_L$ is the isotropy group of an {\it oriented} null line and it is more suitable to model a null direction on the spacetime than $H_L$.  Because of this, we shall focus on the use of  $H^+_L$ but most of the analysis that follows with some modifications also applies to $H_L$. Clearly $H_X\subset H^+_L$.

\begin{definition}
A Lorentzian manifold $(M,g)$ admits a (time oriented) null structure iff the structure group of $M$ reduces to a subgroup of $H^+_L$.
\end{definition}

  Clearly a special case of a null structure arises whenever the structure group of $(M,g)$ reduces to a subgroup of $H_X$ instead. In such a case, the null structure is associated to a nowhere vanishing null vector field on the spacetime instead of a null direction.  As the $H_X\subset H^+_L$, manifolds with structure group\footnote{For all manifolds with a $H^+_L$-structure, the topological structure group
reduces to a subgroup of $SO(n-2)$.} $H^+_L$ are more general  than those with structure group $H_X$ so we shall investigate the properties of the former and specialize where necessary on the latter. Another reason to focus on spacetimes with a $H^+_L$-structure
is that  $H^+_L$  accommodates  better the conformal properties of the null structure than $H^+_X$.

Note that $SO(n-2)=\{ (\ell, v, A)\in H_L^+\vert \ell=1, v={\bf 0}\}$ is a subgroup of $H_L^+$. Rewriting $H_L^+$ as $H_L^+(SO(n-2))$,  a distinguished class of subgroups of $H_L^+(SO(n-2))$ are those for which the matrices $A$ are restricted to lie into a subgroup $K$ of $SO(n-2)$.  Denoting these subgroups with $H^+_L(K)$,  the investigation that follows will focus on Lorentzian
manifolds whose structure group is $H^+_L(K)$ characterized by the existence of fundamental forms. The examples that will be explored
in more detail are those null G-structures for which  $K=SO(n-2),   U(k)$ and $SU(k)$, $n=2k+2$.  Though  those with $K=Spin(7)$,  $G_2$ and others are known to occur in certain supersymmetric backgrounds, see \cite{isogroups} for a list of isotropy groups of spinors in $Spin(n-1,1)$ for $n=10, 11$. Another class of subgroups of $H_L^+(SO(n-2))$  that are known to occur as structure groups of Lorentzian manifolds, which again will not be investigated here, are those which are products $H^+_L(K_1)\times K_2$, where $K_1\times K_2\subset SO(n-2)$.  For example compactification backgrounds have such a structure group.

We conclude this section with the definition of the group
\bea
{\mathring H}(K)=\left\{ \begin{pmatrix}\ell &  v^t \\
                 {\bf 0}& A\end{pmatrix}\vert v\in \bR^{n-2}, A\in K\subseteq SO(n-2), \ell\in \bR_{>0}\right\}~.
\label{isolinex}
\eea
As we shall demonstrate this is the G-structure group  on the orbit spaces of null geodesics in Lorentzian manifold with a null $H^+_L(K)$-structure.

\subsection{Geometry of  null structures}

\subsubsection{Vectors, forms and tensors}

Suppose now that $(M,g)$ has a null $H_L^+$ structure. A consequence of this is that the tangent bundle, $TM$, of $M$ admits a  rank 1 subbundle  $N$ with a null fibre.  This is topologically trivial as it  admits a nowhere vanishing section $X$.    As $N$ is a subbundle of $TM$,  we have that
\bea
0\rightarrow N\rightarrow TM\rightarrow TM/N\rightarrow 0~.
\label{nullseq}
\eea
Furthermore define the 1-form  $\kappa$, $\kappa(W)=g(W,X)$, where $W$ is any vector field on $M$. $\kappa$ is also nowhere vanishing and defines a line subbundle $\tilde N$ in $T^*M$.  As a result one also has
\bea
0\rightarrow \tilde N\rightarrow T^*M\rightarrow T^*M/\tilde N\rightarrow 0~.
\label{dualnullseq}
\eea
Notice that the spacetime metric $g$ does not restrict well on either $TM/N$ or $T^*M/\tilde N$ as it depends on the choice
of representative vector in the fibre of these bundles. Observe also  that $\tilde N$ is not the dual of $N$.

Before we proceed further let us emphasize that $X$ and $\kappa$ are specified up a conformal transformation by strictly positive\footnote{The functions are taken to be positive to preserve the orientation defined by $X$.} functions on $M$.
This is because such  transformations retain $X$ and $\kappa$ as sections $N$ and $\tilde N$, respectively. In addition the geometric properties under investigation here depend only of the null direction  of  $X$ and $\kappa$ instead of $X$ and $\kappa$ themselves.   These null directions are not affected by such conformal rescaling. Although $\kappa$ and $X$ are related via a metric, we shall take the conformal rescalings of $X$ and $\kappa$ to be independent unless otherwise is explicitly
stated.

Next consider $N^\perp$, the orthogonal subbundle of $N$ in $TM$, whose fibres are
\bea
N_p^\perp=\{v\in T_pM\vert \kappa(v)=0\}.
\eea
Note that two metrics in the same conformal class give rise to the same $N^\perp$.   Clearly $N$ is a subbundle in $N^\perp$ and so
\bea
0\rightarrow N\rightarrow N^\perp\rightarrow N^\perp/N\rightarrow 0~.
\label{nullseqx}
\eea
The spacetime metric $g$ restricts well on the fibres of $N^\perp/N$.  In fact it restricts to a non-degenerate Euclidean signature
metric.  This is significant in many general relativity computations.

For reasons that will become more apparent later, let us develop a tensor calculus based on $TM/N$ and $T^*M/\tilde N$.  Let us denote the smooth sections of a vector bundle $E$ over $M$ with $\Gamma(E)$.
First observe that
 $\hat W\in \Gamma(TM/N)$ and $\hat\alpha\in \Gamma(T^*M/\tilde N)$ can be represented by a vector field $W$ and a  1-forms  $\alpha$ on the spacetime $M$, respectively,
up to the equivalence relation $\sim$, where $W\sim W'$  and $\alpha\sim \alpha'$, iff $W=W'+a X$ and $\alpha=\alpha'+ b \kappa$, for some spacetime functions $a$ and $b$.  This can be extended to sections of the tensor bundle $\otimes^k (TM/N) \otimes^\ell  (T^*M/\tilde N)$.  In particular  $\hat I\in\Gamma\left((TM/N) \otimes  (T^*M/\tilde N)\right)$ can be represented by an (1,1) tensor $I$ up to the equivalence $\sim$, where
$I\sim I'$ iff $I=I'+X\otimes \alpha+ W\otimes \kappa$ with  $\alpha$ and $W$ be  a 1-form and a vector field on the spacetime, respectively.
It is often convenient to use instead of $I\sim I'$, $I=I'~ \mathrm{mod}~(X, \kappa)$.

Similarly sections  of $N^\perp/N$ and $\tilde N^\perp/ \tilde N$ can be viewed as sections of $N^\perp$ and $\tilde N^\perp$ up to the same identifications as those for the sections of of $TM/N$ and $T^*M/\tilde N$ described in the previous paragraph.  However, the identification for the sections of $\otimes^k (N^\perp/N) \otimes^\ell  (\tilde N^\perp/\tilde N)$ is somewhat different from of that in $\otimes^k (TM/N) \otimes^\ell  (T^*M/\tilde N)$. In particular $\check I\in \Gamma\left((N^\perp/N)\otimes (\tilde N^\perp/\tilde N)\right)$ can be viewed as a section $I$ of $N^\perp\otimes \tilde N^\perp$  up to the identification $I=I'+X\otimes \alpha+ W\otimes \kappa$, where now $\alpha\in \Gamma(\tilde N^\perp)$ and $W\in \Gamma( N^\perp)$ instead of sections of $T^*M$ and $TM$, respectively.

Next define a Lie derivative type of operation, ${\hat{\cal L}}_W$,  with respect to a vector field $W$ on $M$ on $\Gamma(TM/N)$   as follows
\bea
{\hat{\cal L}}_W \hat V\defeq [W, V]~ \mathrm{mod}~X~,~~~W\in \Gamma(TM), ~~~\hat V\in \Gamma(TM/N)~.
\label{liedef}
\eea
This operation is not always well defined unless $W$ is appropriately restricted.  In particular,  one has the following.
\begin{prop}
${\hat{\cal L}}$ is well defined provided that $W$ preserve the null structure associated to $X$, i.e. $[W, X] ~\mathrm{mod}~X=0$
\end{prop}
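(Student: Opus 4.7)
The plan is to verify well-definedness by checking independence of the chosen representative of $\hat V$ and then tracking which terms in the resulting expression can be absorbed into $\Gamma(N)$. Since sections of $TM/N$ are represented by vector fields on $M$ modulo the equivalence $V\sim V+aX$ for some smooth function $a$ on $M$, the only thing to check is that $[W,V]\;\mathrm{mod}\;X$ does not depend on the representative $V$ chosen.

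Concretely, I would pick two representatives $V$ and $V'=V+aX$ of $\hat V$ and compute
\begin{equation*}
[W,V'] \;=\; [W,V] + [W,aX] \;=\; [W,V] + W(a)\,X + a\,[W,X]~.
\end{equation*}
The term $W(a)\,X$ is manifestly a section of $N$ and so vanishes upon taking $\mathrm{mod}\;X$. The remaining discrepancy is $a\,[W,X]$, and the whole expression is independent of the chosen representative precisely when $[W,X]\in\Gamma(N)$, i.e.\ when $[W,X]\;\mathrm{mod}\;X=0$. This gives the stated sufficient condition. Conversely, allowing $a$ to be an arbitrary smooth function (for instance $a\equiv 1$ on an open set) forces $[W,X]$ to lie in $\Gamma(N)$, so the condition is in fact also necessary and the proposition is sharp.

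A small additional remark I would include is that because the line bundle $N$ is the object intrinsic to the null structure whereas the section $X$ is only specified up to conformal rescaling $X\to fX$ with $f>0$, the equivalence $\mathrm{mod}\;X$ really means $\mathrm{mod}\;\Gamma(N)$; and the condition $[W,X]\;\mathrm{mod}\;X=0$ is itself conformally invariant, since replacing $X$ by $fX$ changes $[W,fX]=W(f)X+f[W,X]$, which still lies in $\Gamma(N)$ iff $[W,X]$ does. So the proposition characterises those vector fields $W$ that descend to a derivation of $\Gamma(TM/N)$, and the main (and essentially only) work is the one-line computation above; no genuine obstacle is anticipated.
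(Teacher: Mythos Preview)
Your proof is correct and follows essentially the same approach as the paper: you expand $[W,V+aX]$ using the Leibniz rule, observe that the $W(a)X$ term is absorbed by $\mathrm{mod}\;X$, and conclude that the remaining term $a[W,X]$ vanishes $\mathrm{mod}\;X$ precisely under the stated hypothesis. Your additional remarks on necessity and conformal invariance go slightly beyond what the paper records, but are correct and in keeping with its spirit.
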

\begin{proof}
Indeed consider another representative, $ V+f X$, of the section $\hat V$ of $TM/N$.  Then
\bea
{\hat{\cal L}}_W  ( \reallywidehat{V+f X})=[W,  V+f X]~\mathrm{mod}~X={\hat{\cal L}}_W  \hat V+ f [W, X]~\mathrm{mod}~X= {\hat{\cal L}}_W  \hat V~,
\eea
where in the last equality we have used the assumption that $W$ preserves the null structure.
\end{proof}

Observe if $W$ and $W'$ preserve the null structure, $[W,W']$ also preserves the null structure.  Using this, one can prove that
\bea
{\hat{\cal L}}_{[W, W']}\hat V={\hat{\cal L}}_W {\hat{\cal L}}_{W'} \hat V-{\hat{\cal L}}_{W'} {\hat{\cal L}}_{W}\hat V~.
\eea
Similarly a Lie derivative can be defined on  $\Gamma(T^*M/\tilde N)$ as
\bea
{\hat{\cal L}}_W \hat\alpha\defeq {\cal L}_W \alpha~\mathrm{mod}~\kappa~.
\eea
This is well defined provided that again $W$ preserves the null structure associated to $\kappa$, i.e. ${\cal L}_W\kappa~\mathrm{mod}~\kappa=0$. This can be generalized to sections of $\otimes^k(TM/N)\otimes^\ell (T^*M/\tilde N)$ by applying appropriately the $\mathrm{mod}~(X, \kappa)$ operation.
In such a case ${\hat{\cal L}}_W$ is well-defined provided that $W$ preserve both $X$ and $\kappa$ null structures.  Observe  that  the condition  ${\cal L}_W\kappa~\mathrm{mod}~\kappa=0$  is not implied from the ${\cal L}_W X~\mathrm{mod}~X=0$  unless $W$ is further restricted to satisfy $i_X {\cal L}_W g= 0 ~\mathrm{mod}~\kappa~$.
Furthermore an exterior derivative can be defined on the sections of exterior bundle  $\Lambda^k (TM/N)$ of $TM/N$ as $\hat d \hat\alpha= d\alpha ~\mathrm{mod}~\kappa$. This is well defined provided that $d\kappa=0~\mathrm{mod}~\kappa$ (or equivalently $\kappa\wedge d\kappa=0$).

A Lie derivative operation $\check {\cal L}$ can also be defined on  $\Gamma(\otimes^k(N^\perp/N)\otimes^\ell (\tilde N^\perp/\tilde N))$.  However now the coefficient tensors that appear in the modulus operation are sections of either
$\otimes^{k-1}N^\perp\otimes^\ell \tilde N^\perp$ or $\otimes^{k}N^\perp\otimes^{\ell-1} \tilde N^\perp$.  For example let $\check \alpha\in \Gamma(\Lambda^2(\tilde N^\perp/\tilde N))$, then one has $\check{\cal L}_W\check\alpha= {\cal L}_W\alpha ~ \mathrm{mod}~\kappa$,
where the modulus operation is up to sections $\beta\wedge \kappa$ with $\beta\in\Gamma(\tilde N^\perp)$.  The operation
of $\check {\cal L}_W$ is well defined provided that $W$ preserve
the null structure associated to both $X$ and $\kappa$ in all cases.

Although this will not be used later for completeness consider  a connection $D$ in $TM$ and define a connection $\hat D$ in $TM/N$ as
\bea
\hat D_W \hat V\defeq D_W  V ~\mathrm{mod}~ X~,~~~W\in \Gamma(TM)~.
\label{condef}
\eea
It can be seen that $\hat D$ is well defined provided that $D$ satisfies
\bea
D_W X=\eta(W) X~,
\label{dwx}
\eea
where $\eta$ is a spacetime 1-form.   Similarly given a connection $D$ in $T^*M$, one can define a connection
$\hat D$ in $T^*M/\tilde N$ as $\hat D_W \hat\alpha\defeq  D_W \hat\alpha ~\mathrm{mod}~ \kappa$ provided that
$D_W \kappa=\theta(W) \kappa$ for some spacetime form $\theta$.  One can then extend $\hat D_W$ to the sections of
$\otimes^k(TM/N)\otimes^\ell (T^*M/\tilde N)$.

\subsubsection{Splitting of the tangent and cotangent bundles}

There is not a natural way to identify $TM/N$ as a subbundle of $TM$.  However, this will be the case
if one  splits of the sequence (\ref{nullseq}).  One way to do this is to choose\footnote{Note that $\lambda$ could also be chosen as $\lambda(X)=f$, where $f$ is a no-where vanishing function of $M$.  However, in such a case one could consider $f^{-1}\lambda$ instead.} a null 1-form  $\lambda$ on $M$, $g^{-1}(\lambda, \lambda)=0$,  such that
$\lambda(X)=1$.  Then $TM=N\oplus Z$,  where $Z$ is the bundle whose fibres, $Z_p$, are $Z_p=\{v\in T_pM\vert \lambda(v)=0\}$.  Indeed notice that the splitting map $\hat\lambda:~TM/N\rightarrow TM$ defined as $\hat\lambda(\hat W)\defeq W-\lambda(W) X$ is independent from the
representative $W$ of $\hat W$, is an injection and the image of $\hat\lambda$ is $Z$.

Similarly, a splitting can be chosen for the sequence (\ref{nullseqx})  via the use of the 1-form $\lambda$ as above now restricted on the sections of $N^\perp$.   As a result
 $N^\perp=N\oplus T $, where the fibres of $T$, $T_p$,  are $T_p=\{u\in N_p^\perp\vert \lambda(u)=0\}$.  $T$ is the image of the
 splitting map $\hat\lambda$ now restricted on $N^\perp/N$.
 The metric of $M$ can be decomposed as
 \bea
 g=\kappa\otimes \lambda+\lambda\otimes \kappa+ g_T~,
 \eea
 where  $g_T$ is the restriction of $g$ on the fibres of $T$. $g_T$ is a metric with Euclidean signature on the fibres of $T$. $T$ is also called a screening space.

One can also define a splitting of the sequence (\ref{dualnullseq}) for the cotangent bundle. In particular  using
$Y(\alpha)\defeq g^{-1} (\alpha, \lambda)$, one has that  $T^*M=\tilde N\oplus \tilde Z$, where $\tilde Z_p=\{\alpha \in T_p^*M\vert Y(\alpha)=0\}$.
Similarly, one defines the orthogonal bundle to $\tilde N$,  $\tilde N^\perp$, in $T^*M$ and the decomposition $\tilde N^\perp=\tilde N\oplus \tilde T$, where $\tilde T_p=\{\alpha\in   \tilde N_p^\perp\vert Y(\alpha)=0\}$.

\begin{definition}
Let $(M,g)$ be a Lorentzian manifold with a null structure contained in $H^+_L(SO(n-2))$. A k-form $\alpha$ on  $(M,g)$ is null iff $\kappa\wedge \alpha=0$.
\end{definition}

\begin{prop}
Let $\alpha$ be a null k-form, then $\alpha=\kappa\wedge \beta$, where $\beta$ is a (k-1)-form.
\end{prop}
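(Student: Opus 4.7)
The plan is to produce $\beta$ explicitly by contracting $\alpha$ with a vector field dual to $\kappa$. The point is that $\kappa$ is nowhere vanishing, so we can find a vector field $Y$ on $M$ with $\kappa(Y)=1$; then the standard Cartan identity for the wedge and the interior product will isolate $\beta$ in one line.

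First, I would produce such a $Y$ globally using the data already introduced in the splitting of the tangent bundle. Given the choice of a null 1-form $\lambda$ with $\lambda(X)=1$ from the preceding subsubsection, define $Y\in \Gamma(TM)$ by $g(Y,\cdot)=\lambda$. Then
\begin{equation*}
\kappa(Y)=g(X,Y)=\lambda(X)=1,
\end{equation*}
so such a $Y$ exists on all of $M$. (Alternatively, any vector field with $\kappa(Y)=1$ does the job; the existence of one reduces to the triviality of the line bundle $\tilde N$, which holds because $\kappa$ is a global nowhere vanishing section.)

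Next, contract $Y$ into the hypothesis $\kappa\wedge\alpha=0$. Using the standard derivation rule for the interior product,
\begin{equation*}
0=i_Y(\kappa\wedge\alpha)=\kappa(Y)\,\alpha-\kappa\wedge i_Y\alpha=\alpha-\kappa\wedge i_Y\alpha.
\end{equation*}
Setting $\beta\defeq i_Y\alpha$, which is a well-defined $(k-1)$-form on $M$, gives $\alpha=\kappa\wedge\beta$, as required.

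There is really no obstacle: the only thing to check is that a globally defined $Y$ with $\kappa(Y)=1$ exists, and this is automatic from the null structure data $(X,\lambda)$ already available. A small remark worth recording is that $\beta$ is not unique; any $\beta'=\beta+f\kappa$ with $f$ a smooth function yields the same $\alpha$, reflecting the fact that the quotient $T^*M/\tilde N$ is where $\beta$ naturally lives.
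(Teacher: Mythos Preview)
Your proof is correct and essentially identical to the paper's: both contract the hypothesis $\kappa\wedge\alpha=0$ with the vector field $Y$ dual to $\lambda$ (so that $\kappa(Y)=\lambda(X)=1$) and read off $\beta=i_Y\alpha$ from the Leibniz rule for $i_Y$. One small slip in your closing remark: for a general $k$-form $\alpha$, the ambiguity in the $(k-1)$-form $\beta$ is $\beta'=\beta+\kappa\wedge\gamma$ with $\gamma$ a $(k-2)$-form, not $\beta+f\kappa$ with $f$ a function---your formula is only the case $k=2$.
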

\begin{proof}
Indeed take the inner derivation of $\kappa\wedge \alpha$ with respect to $Y$ to find
\bea
i_Y(\kappa\wedge \alpha)=\alpha-\kappa\wedge i_Y \alpha=0~,
\eea
where we have used that $i_Y(\kappa)=\kappa(Y)=\lambda(X)=1$. This proves the statement.
\end{proof}
Notice that in general $\beta$ depends on the choice of splitting. However its class $\hat\beta$ as a section of $\Lambda^{k-1}(TM/N)$
does not.

A consequence of the above lemma is that if $\kappa\wedge\alpha=\kappa\wedge \beta$ for some k-forms $\alpha$ and $\beta$, then $\alpha=\beta+\kappa\wedge \gamma$ for some (k-1)-form $\gamma$. Note also that the operation $\delta_\kappa(\alpha)\defeq \kappa\wedge \alpha$ is a cohomology operation on the space of forms as $(\delta_\kappa)^2=0$.  The above proposition implies that
all the cohomology groups of $\delta_\kappa$ are trivial.

\subsubsection{Dependence  on  the choice of splitting}\label{choicesplit}

 The subbundle  $T$ of $TM$ depends on the choice of splitting $\lambda$ and we  denote this with $T_\lambda$.   To investigate the dependence of $T_\lambda$ on the choice of $\lambda$ suppose that $\lambda'$ is another choice of splitting with $\lambda'(X)=1$.  Any 1-form can be written as $\lambda'= a \lambda+ b \kappa+  \gamma$,  where $\gamma\in\Gamma(\tilde T_\lambda)$.  Imposing $\lambda'(X)=1$, one finds that
$\lambda'=\lambda+ b \kappa + \gamma$.  Furthermore imposing the condition that $\lambda'$ must be null, one finds that
\bea
\lambda'=\lambda-{1\over2} \parallel \gamma\parallel^2 \kappa+ \gamma~.
\eea
 To compare $T_{\lambda'}$ with $T_\lambda$ consider a section $Z_{\lambda'}$ of $T_{\lambda'}$.
Writing $Z_{\lambda'}= a X+ bY+ Z_{\lambda}$, where $Z_{\lambda}\in \Gamma(T_{\lambda})$, as any vector field on $M$ can be decomposed in this way, and imposing the conditions $\kappa(Z_{\lambda'})=\lambda'(Z_{\lambda'})=0$, one finds that
\bea
Z_{\lambda'}=Z_{\lambda}-\gamma(Z_{\lambda}) X~.
\eea
Therefore $T_{\lambda'}$ is ``shifted'' relative to $T_{\lambda}$ with $N$ as expected.  Notice that $g_{T_{\lambda'}}=g_{T_{\lambda}}$ because as it has already been mentioned  $g$ restricts well as a Euclidean signature fibre metric on $N^\perp/N$.

Next let us determine $Y_{\lambda'}$ in terms of $Y_\lambda$,  where again the subscripts denote the dependence of the vector fields
on the choice of splitting.  Writing again $Y_{\lambda'}=a Y_\lambda + bX+ Z$,  where $Z\in\Gamma(T_\lambda)$, and imposing
that $\kappa(Y_{\lambda'})=1$, $\lambda'(Y_{\lambda'})=0$ and that $g(Y_{\lambda'},Z_{\lambda'})=0$,  where $Z_{\lambda'}$ is any section of
$T_{\lambda'}$, one finds that
\bea
Y_{\lambda'}=Y_\lambda-{1\over2}  \parallel\gamma\parallel^2 X+ W~,
\eea
where $W(\alpha)= g_T^{-1} (\gamma, \alpha)$ for every  $\alpha\in \Gamma(\tilde T_\lambda)$.

It remains to find the way that the subbundle $\tilde T$ of $T^*M$ depends on the splitting.  For this consider a section $\alpha_{\lambda'}$ of $\tilde T_{\lambda'}$.  Then  $\alpha_{\lambda'}= a \kappa+b \lambda+ \alpha_\lambda$, where $\alpha_\lambda$ is a section of $\tilde T_\lambda$. Imposing the
conditions $i_X \alpha_{\lambda'}=i_{Y_{\lambda'}}\alpha_{\lambda'}=0$ on $\alpha_{\lambda'}$, so that $\alpha_{\lambda'}$ is a section of $\tilde T_{\lambda'}$, one finds that
\bea
\alpha_{\lambda'}=-\alpha_\lambda(W) \kappa+ \alpha_\lambda~.
\eea
Therefore $\tilde T_{\lambda'}$ is ``shifted'' relative to $\tilde T_{\lambda}$ with the line bundle $\tilde N$.

\subsection{Null geodesic congruences and null structures}\label{ngc}

 Null structures are closely related to the existence of null geodesic congruences in a spacetime.  For this impose the condition
  \bea
  \kappa\wedge {\cal L}_X \kappa=0~,
  \label{kappainv}
  \eea
  on $\kappa$.
 A key consequence of (\ref{kappainv}) is as follows \cite{robtr}.
 \begin{prop}\label{nullgeo}
 If $X$ and $\kappa$ satisfy (\ref{kappainv}), then integral curves of $X$ will be null geodesics.
 \end{prop}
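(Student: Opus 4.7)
The plan is to unpack the Lie derivative $\mathcal{L}_X\kappa$ using Cartan's magic formula and then reinterpret it in terms of the Levi-Civita connection, so that the hypothesis $\kappa\wedge\mathcal{L}_X\kappa=0$ becomes a pre-geodesic equation for $X$.

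First I would write $\mathcal{L}_X\kappa=i_Xd\kappa+d(i_X\kappa)$. Since $\kappa=g(X,\cdot)$, we have $i_X\kappa=g(X,X)=0$, because $X$ is null by construction of the null structure. So the computation collapses to $\mathcal{L}_X\kappa=i_Xd\kappa$. Next I would rewrite $d\kappa$ using the Levi-Civita connection $\nabla$: in components $d\kappa_{ab}=2\nabla_{[a}X_{b]}$, where $X_b=g_{bc}X^c$. Contracting with $X^a$ gives
\[
(\mathcal{L}_X\kappa)_b=X^a\nabla_aX_b-X^a\nabla_bX_a.
\]
The second term vanishes: $X^a\nabla_bX_a=\tfrac12\nabla_b(g(X,X))=0$ since $X$ is null. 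Therefore $\mathcal{L}_X\kappa=(\nabla_XX)^\flat$, the metric dual of $\nabla_XX$.

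Finally I would apply the hypothesis: $\kappa\wedge\mathcal{L}_X\kappa=\kappa\wedge(\nabla_XX)^\flat=0$ forces $(\nabla_XX)^\flat$ to be proportional to $\kappa$, i.e.\ $\nabla_XX=fX$ for some function $f$ on $M$. This is exactly the statement that the integral curves of $X$ are null pre-geodesics: reparametrising the curves to absorb $f$ (replacing $X$ by $e^{h}X$ with $h$ solving $X(h)+f=0$ along each orbit) yields an affinely parametrised null geodesic, and this reparametrisation is a conformal rescaling of the kind already permitted in the definition of the null structure, so it does not affect the geometry in question.

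The only mildly delicate step is recognising that $i_X\kappa=0$ and $X^a\nabla_bX_a=0$ are both consequences of the same fact $g(X,X)=0$; once these two terms are killed, the rest is formal. The main conceptual point, rather than a technical obstacle, is that the pre-geodesic condition $\nabla_XX\propto X$ is the intrinsic, conformally invariant content of $\kappa\wedge\mathcal{L}_X\kappa=0$, which is why this condition (and not some stronger affinely-parametrised version) is the natural hypothesis in the null $H^+_L$-structure setting.
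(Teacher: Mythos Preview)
Your proof is correct and follows essentially the same approach as the paper: both use Cartan's formula together with $i_X\kappa=0$ to reduce $\mathcal{L}_X\kappa$ to $i_Xd\kappa$, then identify this with $\nabla_X\kappa=(\nabla_XX)^\flat$ (using that $X^a\nabla_bX_a=\tfrac12\nabla_b g(X,X)=0$), so that the hypothesis yields $\nabla_XX\parallel X$. The paper compresses all of this into the single line $\kappa\wedge\nabla_X\kappa=\kappa\wedge i_Xd\kappa=\kappa\wedge\mathcal{L}_X\kappa=0$, whereas you spell out each step; your additional remark on reparametrisation to an affine parameter is a welcome clarification that the paper omits.
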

 \begin{proof}
  Indeed
 \bea
 \kappa\wedge \nabla_X \kappa=\kappa\wedge i_X d\kappa=\kappa\wedge {\cal L}_X \kappa=0~,
 \eea
 where we have used that $\kappa(X)=0$. So $\nabla_X X\parallel X$.
 \end{proof}

 This is significant as the boundaries\footnote{The use of the term ``boundary''  does not signify that the spacetime necessarily ends at those hypersurfaces. Instead it is used to denote
   hypersurfaces of interest like the null  infinity for asymptotically flat spacetimes or the event horizon of a black hole.} of spacetimes are  orbit spaces
 of null geodesic congruences, e.g. the even horizons of black holes as well as the conformal boundaries at infinity of asymptotically flat spacetimes. This will be explored further below.

 Clearly  (\ref{kappainv}) holds whenever $X$ is Killing, ${\cal  L}_Xg=0$,
as ${\cal  L}_X \kappa=0$. So from the proposition \ref{nullgeo}, one has  $\nabla_X X=0$ and the integral curves of $X$ are null geodesics. All supersymmetric backgrounds whose Killing spinor bilinears include  a null Killing vector belong to this special case.

 Another special case that arises is whenever $\kappa\wedge d\kappa=0$.  In such a case, the spacetime is foliated with leaves $(n-1)$-dimensional hypersurfaces. Taking the inner derivation of $\kappa\wedge d\kappa=0$ with $X$, one arrives as (\ref{kappainv}).
 So again the integral curves of $X$ are null geodesics.

\subsection{Frames and null structures}

In the description of geometry of a spacetime $M$ with a null G-structure with a splitting, it is often useful to introduce a local co-frame $\{\fe^-, \fe^+, \fe^i; \,i=1, \dots, n-2\}$ such that
\bea
\fe^-=\kappa~,~~~\fe^+=\lambda~.
\eea
Such  a notation has also extensively been  used in the investigation of supersymmetric backgrounds.

The metric is written in terms of the co-frame as $g=2 \fe^-\fe^++\delta_{ij}\fe^i\fe^j$.  Under local transformations in the isotropy group $H^+_L$, the co-frame transforms as
\bea
&&\fe^-\rightarrow \ell^{-1} \fe^-~,~~~\fe^+\rightarrow \ell \fe^+-{1\over 2} \ell^{-1} v^2 \fe^-+ v_k\fe^k~,
\cr
&&\fe^i\rightarrow A^i{}_j (\fe^j- \ell^{-1}  v^j \fe^-)~.
\label{frametrans}
\eea
Clearly the transformation of the co-frame under the isotropy group $H_X$ is as above but now $\ell=1$.  Observe that
a change of splitting, investigated in section \ref{choicesplit},  introduces a transformation on the frame of the type described above and so the spacetime metric
does not depend on the choice of splitting as expected.

A local description of the geometry is as follows.  Adapting a coordinate along $X$, $X=\partial_u$,  introducing additional coordinates $v, y^I$ and choosing a splitting, one can write for the coframe
\bea
\fe^-=  h (dv+n_I dy^I)~,~~~  \fe^+=du+{1\over2} V dv+m_I dy^I~,~~~\fe^i= e^i_I  dy^I ~,
\label{locframe}
\eea
where $h$, $V$, $n_I$, $m_J$ and $e^i_J$ depend on all coordinates. The frame then is
\bea
\fe_+= \partial_u~,~~\fe_-=h^{-1} (\partial_v-{1\over2} V\partial_u)~,~~\fe_i=e_i^I (\partial_I-  n_I\partial_v-m_I \partial_u+{1\over2}  V n_I \partial_u)~.
\eea
So one has $X=\fe_+$ and $Y=\fe_-$.
The above local choices of frame and co-frame are not unique.  In particular the $\fe^i$ frame is chosen as $i_Y \fe^i=0$.  This choice
is helpful in the local description of fundamental forms of a null G-structure as it will be explained in section  \ref{sec:form}.  However,
one can also choose $\fe'^i=\fe^i+p^i \fe^-$.  The form of the spacetime metric does not change as the addition of $p^i \fe^-$ can be compensated in a redefinition of $V$ and $m$.

\section{General null G-structures}

\subsection{Forms and null G-structures}\label{sec:form}

Before we turn to investigate examples of individual  null G-structures, it is instructive to give an overview of some of their  properties. Suppose a  spacetime $M$  admits a  $H^+_L(K)$-structure characterized with fundamental forms
  $\kappa$ and one or more additional forms that we collectively denote with   $\chi$.  All the fundamental forms of such  a null structure   satisfy the conditions
 \bea
 \kappa\wedge \chi=0~, ~~~i_X \chi=0~.
 \label{chicon}
 \eea
  These are forms on the spacetime and so  do not
depend on the choice of spitting $\lambda$ of $TM$.  Moreover all the fundamental forms are specified up to a conformal rescaling by strictly positive  spacetime functions. This is because the properties (\ref{chicon}) are independent from the overall normalization of these forms. However in the presence of more than one fundamental forms $\chi$ that satisfy certain relative normalisation conditions not all such rescaling can be independent. An example of this arises in the investigation of the null $H^+_L(SU(k))$-structures below.  However note that not all  fundamental forms of a $H^+_L(K_1)\times K_2$-structure are null as the fundamental forms associated to $K_2$ do not satisfy the first condition in (\ref{chicon}).

The first condition in (\ref{chicon}) implies  that $\chi$  is null and it can be solved to yield $\chi=\kappa\wedge \alpha$ for some form $\alpha$. Indeed taking the inner derivation of the first equation in (\ref{chicon}) with $Y$ and using $\kappa(Y)=1$ yields the desirable result.  Clearly $\alpha$ is not unique as both $\alpha$ and $\alpha+\kappa\wedge \gamma$, for any form $\gamma$, give rise to the same $\chi$.  Typically,  $\alpha$ depends on the choice of splitting
 of $TM$ but not its class $\hat\alpha$ as a section of $\Lambda^*(TM/N)$.

 \begin{prop}\label{splitprop}
 Given a splitting of $TM$,  a null fundamental k-form $\chi$ of a null G-structure can be represented as $\chi=\kappa\wedge \phi$, where $\phi$ is a section of $\Lambda^k(T)$, i.e. it satisfies  $i_X\phi=i_Y\phi=0$.
 \end{prop}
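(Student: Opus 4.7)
The plan is to construct $\phi$ by the explicit formula $\phi := i_Y\chi$ and then verify the three required properties — $\chi=\kappa\wedge\phi$, $i_X\phi=0$, and $i_Y\phi=0$ — by short direct computations. This uses only the defining conditions $\kappa\wedge\chi=0$ and $i_X\chi=0$ of a null fundamental form from \eqref{chicon}, together with the normalisation $i_Y\kappa=\kappa(Y)=\lambda(X)=1$ which characterises the splitting.

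First I would establish the decomposition $\chi=\kappa\wedge\phi$. Applying $i_Y$ to the identity $\kappa\wedge\chi=0$ and using the graded Leibniz rule for interior derivation gives
\begin{equation*}
0 \;=\; i_Y(\kappa\wedge\chi) \;=\; (i_Y\kappa)\,\chi - \kappa\wedge i_Y\chi \;=\; \chi - \kappa\wedge\phi,
\end{equation*}
as required. Next I would verify the two annihilation conditions. The condition $i_Y\phi=0$ is immediate from $i_Y\circ i_Y=0$. The condition $i_X\phi=0$ follows from the anticommutation $i_Xi_Y = -i_Yi_X$ on forms together with the hypothesis $i_X\chi=0$, since $i_X\phi = i_Xi_Y\chi = -i_Yi_X\chi = 0$. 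Finally, to identify $\phi$ as a section of $\Lambda^*(T)$ in the sense declared in the statement, I would invoke the characterisation of $\tilde T$ supplied by the splitting, namely $\tilde T = \{\alpha\in T^*M : i_X\alpha = i_Y\alpha = 0\}$, which extends multiplicatively to higher exterior powers.

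I do not anticipate a serious obstacle: the entire argument is essentially a three-line Leibniz calculation. The only conceptual point worth highlighting is that the representative in $\chi=\kappa\wedge\alpha$ furnished by the previous proposition is ambiguous up to $\alpha\mapsto\alpha+\kappa\wedge\gamma$, and the formula $\phi=i_Y\chi$ is precisely the choice of gauge within this ambiguity class that eliminates the $\lambda$- and $\kappa$-components. This dependence on $Y$ (equivalently on $\lambda$) is what makes the proposition conditional on a fixed splitting, consistent with the behaviour of $\tilde T_\lambda$ under changes of splitting established in section \ref{choicesplit}.
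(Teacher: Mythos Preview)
Your proof is correct and in fact cleaner than the paper's. Both arguments arrive at the same representative --- your $\phi=i_Y\chi$ coincides with the paper's $\alpha=\beta-\kappa\wedge i_Y\beta$ once one notes $i_Y(\kappa\wedge\beta)=\beta-\kappa\wedge i_Y\beta$ --- but the routes diverge in how $i_X\phi=0$ is obtained. The paper, having secured $i_Y\alpha=0$, passes through an explicit frame decomposition $\alpha=\lambda\wedge\zeta+\phi$ with $\zeta,\phi\in\Gamma(\Lambda^*(T))$ and then uses $i_X\chi=0$ to kill the $\lambda\wedge\zeta$ piece. You bypass this entirely with the one-line observation $i_X i_Y\chi=-i_Y i_X\chi=0$, which is both shorter and frame-independent. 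The paper's approach has the mild expository advantage of making the role of the splitting $\lambda$ visible in the intermediate decomposition, but your argument shows that the only place the splitting actually enters is through $Y$, which is the conceptually sharper statement.
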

\begin{proof}
First let us demonstrate $\chi=\kappa\wedge \alpha$,  where $i_Y\alpha=0$.  Indeed as $\chi$ can be written as $\chi=\kappa\wedge \beta$ for some form $\beta$, we set
 $\alpha=\beta-\kappa\wedge i_Y \beta$.    Then clearly $\chi=\kappa\wedge \beta=\kappa\wedge \alpha$ and $i_Y \alpha=0$.

 To continue decompose $\alpha$ as $\alpha=\lambda\wedge \zeta+ \phi$, where $\zeta$ and $\phi$ are sections of $\Lambda^{k-1}(T)$ and $\Lambda^k(T)$, respectively.  For this we have used that $i_Y\alpha=0$.  Therefore, we have that
 $\chi=\kappa\wedge \lambda\wedge \zeta+ \kappa\wedge \phi$.  Imposing now the condition $i_X\chi=0$ in (\ref{chicon}), we find that
 $\kappa\wedge \zeta=0$ and so $\chi=\kappa\wedge \phi$, where $i_X\phi=i_Y\phi=0$.
\end{proof}

For a  null G-structure $H^+_L(K)$  the representative $\phi$ of $\chi$, as described in the previous proposition,
is identified with a  fundamental form of the compact subgroup $K$. However other representatives have also been used in the literature to describe a null G-structure on a spacetime. Of course  it is  required  that the geometry of spacetime to be independent of these choices.

It is clear from the comment above that the fundamental forms of a null $H^+_L(K)$-structure can be constructed from those of $K$.
In particular, the $H^+_L(SO(n-2))$-structure apart from $\kappa$ also admits a fundamental $(n-1)$-form $\chi$ which can be identified
with the fibre volume form of $N^\perp$, $\chi=d\mathrm{vol}(N^\perp)$.  A similar construction will be made below for the fundamental
forms of the $H^+_L(U(k))$ and $H^+_L(SU(k))$ structures.

In an adapted  frame to a null G-structure associated to the forms $\kappa$ and $\chi$ as in (\ref{locframe}), one has $\kappa=\fe^-$ and $\chi=\fe^-\wedge \phi$, where $\phi={1\over (k-1)!} \phi_{i_1\dots i_{k-1}} \fe^{i_1}\wedge \cdots\wedge \fe^{i_{k-1}}$.  The components of $\phi$ in the frame are constant.  Observe that $\phi$ is not covariant under the patching conditions (\ref{frametrans}) and depends on the choice of splitting.  Though of course $\chi$ does not.

\subsection{Geometry of null transversal hypersurfaces}

Many of the spacetime boundaries, like Killing horizons and  conformal boundaries of asymptotically flat spacetimes, are null hypersurfaces
which are transversal to null geodesic congruences.

\begin{definition}
A hypersurface ${\cal H}$ in $M$ is transversal to the flow of a vector field $X$ iff $X$ is nowhere tangent to ${\cal H}$.
\end{definition}

Null geodesic congruences have many transversal hypersurfaces, the focus here will be on transversal hypersurfaces which in addition  are {\it null},
i.e. hypersurfaces for which their tangent space, $T_p{\cal H}$, at every point $p\in {\cal H}$, is a null subspace of $T_pM$. Such hypersurfaces admit  normal vector field ${\bf n}$ which is null and so simultaneously tangent to the hypersurface. In such a case, $g(X, {\bf n})\not=0$  as otherwise $X$, which generates the null geodesic congruence,  will
be tangent to ${\cal H}$. So after a possible rescaling of ${\bf n}$, one can set $g(X, {\bf n})=1$.

As ${\bf n}$ is nowhere vanishing it defines a trivial line bundle ${\cal N}$ in $T{\cal H}$ and so one has
\bea
0\rightarrow {\cal N}\rightarrow T{\cal H}\rightarrow T{\cal H}/{\cal N}\rightarrow 0
\eea
This sequence can be split using the pull back  $j^*\kappa$ of $\kappa$ on ${\cal H}$, where $j$ is the inclusion of ${\cal H}$ in $M$,  as $j^*\kappa({\bf n})=g(X, {\bf n})=1$. So one has $T{\cal H}={\cal N}\oplus T^{\cal H}$,  where $T^{\cal H}_p=\{ v\in T_p{\cal H}\vert j^*\kappa(v)=0\}$.  Observe
that the spacetime metric restricted on ${\cal H}$ becomes a positive definite fibre metric on  $T^{\cal H}$.
Similarly
$T^*{\cal H}=\tilde {\cal N}\oplus \tilde T^{\cal H}$,  where now the fibres of $\tilde {\cal N}$ are spanned by $j^*\kappa$ and
$\tilde T^{\cal H}_p=\{\alpha\in T_p^*M\vert \alpha({\bf n})=0\}$.  Note that the above two described splittings are natural in the sense
that both $j^*\kappa$ and ${\bf n}$ are determined in terms of  the null structure and the choice of hypersurface. No other
arbitrary choices are involved.

Incidentally, it is well known that ${\bf n}$ generates a null geodesic congruence in ${\cal H}$.  Indeed
$\nabla_{\bf n} g({\bf n}, {\bf n})= 2 g(\nabla_{\bf n}{\bf n}, {\bf n})=0$ implies that $\nabla_{\bf n}{\bf n}$ is tangent
to ${\cal H}$.  Then for any other tangent vector field $Z$ to ${\cal H}$, $\nabla_{\bf n} g({\bf n}, Z) =g(\nabla_{\bf n}{\bf n}, Z)+g({\bf n}, \nabla_{\bf n}Z)=g(\nabla_{\bf n}{\bf n}, Z)=0$, where $g({\bf n}, \nabla_{\bf n}Z)$ vanishes as a consequence
of the torsion free condition of $\nabla$ and that $[{\bf n}, Z]$ is tangent to ${\cal H}$.  Therefore $\nabla_{\bf n}{\bf n}\parallel {\bf n}$.

\begin{theorem}
Suppose that $M$ admits a null $H^+_L(K)$-structure admitting fundamental forms $\kappa$ and $\chi$, and ${\cal H}$ be a null hypersurface in $M$ transversal to a null geodesic congruence generated by $X$.  Then ${\cal H}$ admits a $\bR^+\times K$-structure, where $\bR^+\times K$ is the isotropy group of the conformal class of ${\bf n}$ in $H^+_L(K)$.
\end{theorem}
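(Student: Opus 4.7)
The plan is to produce, from the ambient $H^+_L(K)$-structure on $M$, natural geometric data on ${\cal H}$ — a conformal class of null line bundles, a screen distribution with an induced positive-definite metric, and a $K$-fundamental form on that screen — and then to recognise this data as cut out by a $\bR^+ \times K$-reduction of the frame bundle of ${\cal H}$.

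First I would assemble the intrinsic structures on ${\cal H}$. The splitting $T{\cal H} = {\cal N} \oplus T^{\cal H}$ with ${\cal N} = \langle {\bf n}\rangle$ and $T^{\cal H} = \ker j^*\kappa$, along with the positive-definite fibre metric $g_{\cal H}$ on $T^{\cal H}$, is already established in the paragraph preceding the theorem. Since the null $H^+_L(K)$-structure specifies $X$ (and hence $\kappa$) only up to a strictly positive rescaling, the normalisation $g(X,{\bf n})=1$ forces ${\bf n} \to f^{-1}{\bf n}$ and $j^*\kappa \to f j^*\kappa$ whenever $X\to fX$ with $f>0$; thus ${\cal N}$ is intrinsically defined only as a conformal null line bundle, whereas the splitting, the subbundle $T^{\cal H}$ and the metric $g_{\cal H}$ are all invariant.

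Next I would transport the $K$-part of the structure down to $T^{\cal H}$. By Proposition \ref{splitprop}, after any choice of ambient splitting the additional fundamental form(s) decompose as $\chi = \kappa\wedge\phi$ with $\phi\in\Gamma(\Lambda^k T)$. The contraction
\begin{equation}
\phi_{\cal H} \defeq i_{\bf n} j^*\chi
\end{equation}
is built directly from $\chi$ and ${\bf n}$ and is therefore independent of the ambient choice of splitting; using $j^*\kappa({\bf n})=1$ one finds $\phi_{\cal H} = j^*\phi - j^*\kappa\wedge i_{\bf n} j^*\phi$, which is annihilated by $i_{\bf n}$ and hence descends to a section of $\Lambda^k T^{*{\cal H}}$. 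Because $(g_T,\phi)$ encode the $K$-reduction of the ambient screen bundle $T\subset N^\perp$, the restrictions $(g_{\cal H},\phi_{\cal H})$ encode the same $K$-reduction of the orthonormal frame bundle of $T^{\cal H}$.

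Finally I would identify the transition group. In an adapted co-frame $(j^*\kappa,{\bf e}^i)$ on ${\cal H}$ with $({\bf e}^i)$ a $K$-orthonormal co-frame for $(T^{*{\cal H}},g_{\cal H},\phi_{\cal H})$, the residual freedom is precisely $j^*\kappa \to \ell^{-1} j^*\kappa$ with $\ell>0$ together with ${\bf e}^i \to A^i{}_j {\bf e}^j$ with $A\in K$. Comparing with (\ref{isoline}), this is exactly the subgroup of $H^+_L(K)$ cut out by $v=0$, i.e.\ the isotropy of the conformal class of ${\bf n}$, which is $\bR^+\times K$. The main obstacle I anticipate is the middle step: verifying that $(g_{\cal H},\phi_{\cal H})$ genuinely define a $K$-reduction rather than of some strictly larger subgroup of $SO(n-2)$, and checking that any relative normalisation condition among the $\chi$'s — such as the one flagged earlier for the $SU(k)$ case — survives intact under $j^*$ and contraction with ${\bf n}$.
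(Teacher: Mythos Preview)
Your proposal is correct and follows essentially the same route as the paper: both arguments extract the screen form on ${\cal H}$ via $\omega = i_{\bf n}\, j^*\chi$, check that it is annihilated by $i_{\bf n}$, and note that any relative normalisation conditions among the $\chi$'s survive pull-back. Your version is more explicit in the final step, where you identify the residual frame freedom $(j^*\kappa\to\ell^{-1}j^*\kappa,\ {\bf e}^i\to A^i{}_j{\bf e}^j)$ as the $v=0$ subgroup of $H^+_L(K)$, whereas the paper simply asserts that $j^*\kappa$ and $\omega$ are the fundamental forms of $\bR^+\times K$.
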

\begin{proof}
Recall that the form $\chi$ satisfies, $\kappa\wedge \chi=i_X\chi=0$. As $j^*(\kappa\wedge \chi)=j^*\kappa\wedge j^* \chi=0$, $j^*\chi=j^*\kappa\wedge \omega$, where $j$ is the inclusion of ${\cal H}$ in $M$ and $\omega$ is a section of $\Lambda^{k-1}T^{\cal H}$. Indeed acting on $j^*\kappa\wedge j^* \chi=0$ with $i_{\bf n}$, one finds that $\omega=i_{{\bf n}} j^*\chi$ and so $i_{{\bf n}}\omega=0$.      The forms $j^*\kappa$ and  $\omega$ are the fundamental forms of $\bR^+\times K$ and so  ${\cal H}$ admits $\bR^+\times K$-structure.  Note that if $H^+_L(K)$ admits more than one fundamental forms $\chi$ that satisfy non-trivial normalization conditions, these are also automatically satisfied on ${\cal H}$.
\end{proof}

One of the consequences of the above proposition is as follows.

\begin{corollary}
 Let $M$ admit a $H^+_L(SO(n-2))$  structure and ${\cal H}$ be a null hypersurface in $M$ defined as in the proposition above.   Then ${\cal H}$ is oriented.
\end{corollary}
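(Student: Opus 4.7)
The plan is to exhibit a nowhere-vanishing top form on the $(n-1)$-dimensional hypersurface ${\cal H}$, using the fundamental $(n-1)$-form $\chi$ of the $H^+_L(SO(n-2))$-structure, which is precisely of the correct rank to pull back to a candidate volume form on ${\cal H}$. The natural candidate is $j^*\chi$, and I would argue that it is nowhere zero.

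First I would invoke the preceding theorem to conclude that ${\cal H}$ carries an $\bR^+ \times SO(n-2)$-structure with fundamental forms $j^*\kappa$ and $\omega = i_{\bf n} j^*\chi$, related by $j^*\chi = j^*\kappa \wedge \omega$ together with $i_{\bf n}\omega = 0$. Here $j^*\kappa$ is nowhere vanishing on ${\cal H}$ because $j^*\kappa({\bf n}) = g(X,{\bf n}) = 1$, and $\omega$ is (by the proposition on splittings) a fibre volume form on the rank-$(n-2)$ bundle $T^{\cal H} = \ker j^*\kappa$; recall that for $K = SO(n-2)$ the fundamental form $\chi$ is identified with the fibre volume form $d\mathrm{vol}(N^\perp)$, and $T^{\cal H}_p$ is precisely a complement of $\mathrm{span}(X)$ inside $N^\perp_p$ (since $X$ is transversal to ${\cal H}$ while $g(X,X)=0$ and $T^{\cal H}_p \subset N^\perp_p$ has codimension one in $N^\perp_p$).

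The cleanest way I would then conclude is through the structure group: $\bR^+ \times SO(n-2) \subset GL^+(n-1)$, since $\bR^+$ acts by positive scalars on the line ${\cal N}$ and $SO(n-2)$ has determinant one on $T^{\cal H}$. Having the structure group of $T{\cal H}$ reduced to a subgroup of $GL^+(n-1)$ yields orientability. Alternatively, a pointwise check confirms the same thing: on a frame $({\bf n}, e_1,\dots,e_{n-2})$ with $\{e_i\}$ a basis of $T^{\cal H}_p$, one computes $(j^*\kappa \wedge \omega)({\bf n},e_1,\dots,e_{n-2}) = \omega(e_1,\dots,e_{n-2}) \neq 0$, so $j^*\chi$ is a globally defined nowhere-vanishing top form on ${\cal H}$.

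I do not anticipate a genuine obstacle: the corollary is essentially a repackaging of the theorem together with the observation $\bR^+ \times SO(n-2) \subset GL^+(n-1)$. The only small point requiring care is to justify that $\omega$ is genuinely a fibre volume form on $T^{\cal H}$ rather than, say, a form which could degenerate; this follows because the pullback of the ambient identification $\chi = d\mathrm{vol}(N^\perp)$ restricts faithfully to the complement $T^{\cal H}_p$ of $N_p$ inside $N^\perp_p$.
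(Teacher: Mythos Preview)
Your proposal is correct and follows the same approach as the paper: exhibit $j^*\chi$ as a nowhere-vanishing top form on ${\cal H}$. The paper's proof is terser---it simply asserts that the pull-back of the nowhere-vanishing $(n-1)$-form $\chi$ is a nowhere-vanishing top form on ${\cal H}$---while you supply the justification (via the decomposition $j^*\chi=j^*\kappa\wedge\omega$ and the pointwise evaluation, or equivalently via the structure-group inclusion $\bR^+\times SO(n-2)\subset GL^+(n-1)$) that the paper leaves implicit.
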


\begin{proof}
As $M$ admits a null nowhere vanishing (n-1)-form $\chi$.  The pull-back of $\chi$ on ${\cal H}$ defines a nowhere vanishing top
form on ${\cal H}$ and so the hypersurface is oriented.
\end{proof}

\subsection{Orbit spaces of null geodesics}\label{sec:gorbit}
\subsubsection{Geometry of null geodesic orbit spaces}

In many examples of interest, the geometry of a   spacetime with a $H^+_L$ structure can be described as a fibration. To  investigate the conditions required for this, let us assume that there is an open set $U\subseteq M$ such that the foliation on $M$ generated by the flow of $X$ is regular, i.e. the orbit space ${\cal M}$ of the integral curves of $X$ in $U$ is a manifold and the projection of $p:\,U\rightarrow {\cal M}$ is a surjection. $U$ may be thought of as a neighbourhood of a spacetime boundary.

To find whether a null structure of a spacetime can be projected on ${\cal M}$ consider the following definition whose justification
is provided in the theorem that follows.
 \begin{definition}
 A fundamental  form $\chi$ of a $H^+_L(K)$-structure is preserved  under the  flow generated by $X$ iff
\bea
{\cal L}_X \chi= b\, \chi ~,
\label{flowinv}
\eea
for some spacetime function $b$ which may depend on $\chi$.
\end{definition}
It should be noted that the definition above does not depend on the choice of $X$ in its conformal class. Any other choice will
lead to (\ref{flowinv}) up to an appropriate redefinition of the function $b$.  The same applies for the choice of $\chi$ in its conformal class.

\begin{theorem}\label{th:orbit}
Suppose that the spacetime $M$ admits a  $H^+_L(K)$-structure with fundamental forms $\kappa$ and $\chi$,  and that both these forms  are preserved under the
 flow generated by $X$.  In addition assume that  there is a open set $U\subseteq M$ such that the foliation generated by $X$ has orbit space ${\cal M}$ and $U$ admits a hypersurface ${\cal S}$ which is nowhere tangent to $X$.   Then ${\cal M}$  admits a  null ${\mathring H}(K)$-structure represented by forms in the conformal class of  $\kappa$ and $\chi$.
\end{theorem}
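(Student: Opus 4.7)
The plan is to use the transversal hypersurface $\mathcal{S}$ to fix invariant representatives of $\kappa$ and $\chi$ within their conformal classes and then push them down to $\mathcal{M}$ via the projection $p$.

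First, since the foliation of $U$ by the integral curves of $X$ is regular with orbit space $\mathcal{M}$ and $\mathcal{S}$ is transversal to $X$, shrinking $U$ if necessary, the restriction $p\vert_{\mathcal{S}}\colon\mathcal{S}\to \mathcal{M}$ is a diffeomorphism and each orbit of $X$ meets $\mathcal{S}$ exactly once. To obtain an invariant representative of $\kappa$, I would solve the first-order linear ODE $X(f)=-af$ along each orbit of $X$ with initial condition $f|_{\mathcal{S}}=1$; since $a$ is smooth and $X$ is nowhere zero this yields a unique strictly positive solution $f\colon U\to\bR_{>0}$, and $\kappa'\defeq f\kappa$ then satisfies $\mathcal{L}_X\kappa'=0$. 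Combined with $i_X\kappa'=0$, this means $\kappa'$ is basic for the fibration $p$ and descends to a nowhere vanishing 1-form $\hat\kappa$ on $\mathcal{M}$ with $p^*\hat\kappa=\kappa'$. The same construction applied to $\chi$ yields $\hat\chi$ on $\mathcal{M}$; when $H^+_L(K)$ carries several fundamental forms with nontrivial relative normalisation conditions, the rescalings are chosen simultaneously so that those conditions descend as well.

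Next, I would verify that $(\hat\kappa,\hat\chi)$ is a null $\mathring{H}(K)$-structure on $\mathcal{M}$. The defining identity $\kappa\wedge\chi=0$ on $M$ survives the rescaling as $\kappa'\wedge\chi'=0$, and since both forms are basic this descends to $\hat\kappa\wedge\hat\chi=0$. To identify the structure group, I would use the adapted coframe (\ref{locframe}) where $X=\partial_u$, so $\mathcal{M}$ is parametrised locally by $(v,y^I)$ and the induced coframe consists of $\hat\kappa$ together with the projections of the transversal frame $\fe^i$, the $\fe^+$ direction being projected out. Applying the transition rule (\ref{frametrans}) of an $H^+_L(K)$ change of frame on $M$ and dropping the $\fe^+$ component, the remaining transformation on $(\fe^-,\fe^i)$ becomes $\fe^-\to\ell^{-1}\fe^-$ and $\fe^i\to A^i{}_j(\fe^j-\ell^{-1}v^j\fe^-)$, which is precisely an element of $\mathring{H}(K)$ as defined in (\ref{isolinex}) after the obvious reparametrisation.

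The main obstacle I anticipate is showing that the construction is canonical, i.e.\ that the resulting conformal classes of $\hat\kappa$ and $\hat\chi$ on $\mathcal{M}$ are independent of the auxiliary choice of section $\mathcal{S}$. A different choice $\mathcal{S}'$ alters the initial condition for the ODE by a strictly positive function which, being constant along orbits, is the pullback of a positive function on $\mathcal{M}$; hence $\hat\kappa$ gets multiplied by such a function, which is exactly the freedom in its conformal class. A parallel argument handles $\hat\chi$, and the compatibility with the relative normalisations of multiple fundamental forms follows by combining the two rescalings consistently. No other choices enter the construction, so the $\mathring{H}(K)$-structure on $\mathcal{M}$ is determined up to the stated conformal freedom.
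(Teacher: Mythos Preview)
Your proposal is correct and follows essentially the same route as the paper: rescale $\kappa$ and $\chi$ by positive functions solving a first-order ODE along the $X$-orbits (with initial data on $\mathcal{S}$) so that the rescaled forms become basic, then push them down to $\mathcal{M}$ and observe that dropping the $\fe^+$-direction reduces the structure group from $H^+_L(K)$ to $\mathring{H}(K)$. Your treatment is in fact slightly more thorough than the paper's in two places: you spell out the frame-transition argument for why the induced structure group is $\mathring{H}(K)$, and you explicitly verify that a different choice of section $\mathcal{S}'$ changes $\hat\kappa$ and $\hat\chi$ only by pull-backs of positive functions on $\mathcal{M}$, which the paper relegates to a remark after the proof.
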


\begin{proof}
First notice that the invariance  on $\kappa$, ${\cal L}_X\kappa= a \kappa$, under  the flow generated by $X$,  is equivalent to $\kappa\wedge {\cal L}_X \kappa=0$.  So from the results of section \ref{ngc}, $X$ generates
a null geodesic congruence.  Consider the projection $\pi:  U\rightarrow {\cal M}$.  The necessary and sufficient conditions
for a form $\alpha$ on $U$ to arise as a pull-back of a form $\beta$ on ${\cal M}$, $\alpha=\pi^*\beta$, are $i_X \alpha={\cal L}_X\alpha=0$.

Clearly $\kappa$ and $\chi$ satisfy the first condition, see (\ref{chicon}),  but not the Lie derivative condition.  However both $\kappa$ and $\chi$
can be considered up to a conformal factor.  So demanding invariance of  $e^f \kappa$ under the flow of $X$, for some function  $f$  on $U$,
gives the differential condition
\bea
X(f)=a~.
\label{defeqn}
\eea
A similar differential equation can be derived for $\chi$.

Differential equations as in (\ref{defeqn}) have been considered before in the context of manifolds \cite{jmlie} and they admit a unique solution.  In particular given a hypersurface ${\cal S}$ in $U$ such that $X$ is nowhere tangent on ${\cal S}$, the (\ref{defeqn}) has a unique solution in a neighbourhood $W\subseteq U$
of $S$ in $U$ such that $f\vert_{\cal S}=q$, where $q$ is a smooth function of ${\cal S}$.

Therefore from the assumptions of the theorem,   the forms $ e^f \kappa$ and $e^h \chi$ exist and they are the pull-back of the forms
$\kappa_{\cal M}$ and $\chi_{\cal M}$ on ${\cal M}$, respectively, where $X(f)=a$ and $X(h)=b$.  Furthermore $\kappa_{\cal M}\wedge \chi_{\cal M}=0$ as $\pi^*$ is an inclusion.
As under the projection $\pi_*X=0$,  the structure group of ${\cal M}$ is ${\mathring H}(K)$, see (\ref{isolinex}).  $T^*{\cal M}$ admits a trivial line bundle ${{\tilde{\cal N}}}$ whose fibres are spanned by $\kappa_{\cal M}$ but it does not naturally split as $T^*{\cal M}={\tilde{\cal N}}\oplus \tilde T^{\cal M}$.  Of course a splitting can always be arranged by choosing a dual vector field to $\kappa_{\cal M}$.

\end{proof}

A priori the solutions of (\ref{defeqn}) and so the choice of forms $ e^f \kappa$ and $e^h \chi$ depends on the choice of the hypersurface ${\cal S}$ and the choice of the boundary condition $f\vert_{\cal S}=q$ on ${\cal S}$.  As two solutions $f_1$ and $f_2$ of (\ref{defeqn}) satisfy
$X(f_1-f_2)=0$ and so $f_1=f_2+p$, where $p$ is a function of ${\cal M}$. The arbitrariness in the choice of hypersurface ${\cal S}$ and boundary condition can be compensated in the choice of the conformal class of $\kappa_{\cal M}$ and $\chi_{\cal M}$ on ${\cal M}$.

It should be noted that if a $H^+_L(K)$-structure has fundamental form $\kappa$ and more than one forms $\chi$ that
are required to obey  relative normalisation conditions, ${\cal M}$ is induced with  ${\mathring H}(K)$-structure  provided that the flow of
$X$ apart from individual forms also preserves their relative normalization conditions.  An example of this arises in the
description  of the $H^+_L(SU(k))$ structure, see section \ref{sec:suk}.

\begin{corollary}
 Let $M$ admit a $H^+_L(SO(n-2))$-structure and $\kappa$ be preserved by the $X$ flow, then $d\mathrm{vol}(N^\perp)$ is preserved by the $X$ flow.  Moreover  if the remaining   conditions of the theorem \ref{th:orbit} are met, then ${\cal M}$ is oriented.
\end{corollary}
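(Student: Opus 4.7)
The plan is to exploit a pointwise dimension-count argument. At each point of $M$, the space of $(n-1)$-forms $\alpha$ satisfying $i_X\alpha=0$ is exactly one-dimensional, because such an $\alpha$ factors through the $(n-1)$-dimensional quotient $T_pM/\langle X_p\rangle$. Moreover the second condition $\kappa\wedge\alpha=0$ in \eqref{chicon} is automatic in this range: since $\kappa\wedge\alpha$ is a top form, $i_X(\kappa\wedge\alpha)=(i_X\kappa)\alpha-\kappa\wedge i_X\alpha=0$ (using $i_X\kappa=\kappa(X)=0$), and the nowhere-vanishing of $X$ forces $\kappa\wedge\alpha=0$. Consequently $\chi=d\mathrm{vol}(N^\perp)$ is, up to multiplication by a function, the unique nowhere vanishing $(n-1)$-form on $M$ satisfying $i_X\chi=0$.

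With this uniqueness in hand, the first claim is immediate. I would apply it to the form $\alpha={\cal L}_X\chi$: the Cartan identity $[{\cal L}_X,i_X]=i_{[X,X]}=0$ gives $i_X{\cal L}_X\chi={\cal L}_X(i_X\chi)=0$, so ${\cal L}_X\chi$ satisfies the same algebraic constraint as $\chi$. By the uniqueness noted above, ${\cal L}_X\chi=b\,\chi$ for some spacetime function $b$, which is precisely the invariance condition \eqref{flowinv}. Notably this step does not use the hypothesis ${\cal L}_X\kappa=a\kappa$; that assumption is needed only in the next step when invoking Theorem~\ref{th:orbit}.

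For the second assertion, with $\kappa$ assumed preserved and $\chi$ now shown to be preserved, the hypotheses of Theorem~\ref{th:orbit} are satisfied, and that theorem yields a nowhere vanishing descended form $\chi_{\cal M}$ on the orbit space ${\cal M}$. Because $\pi_* X=0$, the orbit space has dimension $n-1$, so $\chi_{\cal M}$ is a nowhere vanishing top form on ${\cal M}$, i.e.\ a volume form, and ${\cal M}$ is therefore oriented.

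The only subtle point to handle is the observation that $\kappa\wedge\alpha=0$ and $i_X\alpha=0$ are not independent at degree $(n-1)$ --- the first is implied by the second --- which is what makes the dimension count so sharp and the uniqueness of $\chi$ automatic. Once this is in place the argument is essentially linear algebra at each point combined with the Cartan identity, with the deeper geometric input of Theorem~\ref{th:orbit} dispatching the orbit-space half of the statement.
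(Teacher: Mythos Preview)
Your proof is correct and takes a genuinely different route from the paper's. The paper proves ${\cal L}_X\chi=b\,\chi$ by direct frame computation: writing $\chi=\fe^-\wedge\fe^1\wedge\cdots\wedge\fe^{n-2}$ and using ${\cal L}_X\kappa=a\,\kappa$ together with ${\cal L}_X\fe^i=i_Xd\fe^i$, it reads off the explicit formula $b=a+i_Xd\fe^i{}_i$. Your argument instead bypasses any choice of frame or splitting by the pointwise observation that the kernel of $i_X$ on $\Lambda^{n-1}T_p^*M$ is one-dimensional when $X_p\neq0$, combined with $[{\cal L}_X,i_X]=0$. This is cleaner and, as you correctly note, shows that the preservation of $\chi$ does not actually depend on the hypothesis ${\cal L}_X\kappa=a\,\kappa$; the paper's frame computation obscures this, since $a$ enters its explicit formula for $b$. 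What the paper's approach buys is an explicit expression for $b$ in frame data, which could be useful if one later needs to solve $X(h)=b$ as in Theorem~\ref{th:orbit}; your approach establishes existence of $b$ but says nothing about its form. For the second claim both arguments are the same: invoke Theorem~\ref{th:orbit} to descend $\chi$ to a nowhere vanishing top form on the $(n-1)$-dimensional orbit space ${\cal M}$.
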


\begin{proof}
We have demonstrated that if $M$ admits a $H^+_L(SO(n-2))$ structure, there is a nowhere vanishing (n-1)-form $d\mathrm{vol}(N^\perp)$
on $M$. Now if ${\cal L}_X\kappa=a\, \kappa$, then ${\cal L}_X d\mathrm{vol}(N^\perp)= b\, d\mathrm{vol}(N^\perp)$,  where $b=a+i_X d\fe^i{}_i$.
Therefore, the $H^+_L(SO(n-2))$ structure is preserved by the flow of $X$.  In particular, $ d\mathrm{vol}(N^\perp)$ induces a
nowhere vanishing top form on ${\cal M}$ and so it is oriented.
\end{proof}

Next let us investigate the conditions for the component $g_T$ of the metric $g$ of $M$ to arise from an appropriate tensor on ${\cal M}$.  First observe that the push forward bundle $\pi_* T$ on ${\cal M}$ is independent from the choice of the splitting $\lambda$ used to define $T$.  Because of this, one can define $\pi_*(N^\perp/N)\defeq \pi_* T$. After choosing a splitting,  from construction one has $i_X g_T=0$. To continue suppose there is a splitting such that the  condition ${\cal L}_X g_T= c\, g_T$ holds.
Then using the same arguments as in theorem \ref{th:orbit}, there is a tensor $g_{\cal M}$ such that $\pi^* g_{\cal M}= e^f\, g_T$
for some function $f$ of $M$ with $X(f)=c$.  The tensor $g_{\cal M}$ is a fibre metric on the subbundle $\pi_* (N^\perp/N)$ in $T{\cal M}$.

\subsubsection{Reconstruction of $M$ from ${\cal M}$}

One can reconstruct a Lorentzian manifolds $M$  with a null G-structure from the geometric data on ${\cal M}$ given by $\kappa_{\cal M}$, $\chi_{\cal M}$,
and $g_{\cal M}$ as follows.  Consider $M=\bR\times {\cal M}$.  Set $X=\partial_u$, where $u$ is the coordinate of $\bR$. The fundamental
forms of the null G-structure on $M$ are $\kappa={a_1}\, \pi^*\kappa_{\cal M}$ and $\chi= {a_2}\, \pi^*\chi_{\cal M}$, where $a_1, a_2$
are strictly positive functions on $M$.  Note however that if there are two or more fundamental forms $\chi$, the conformal factors $a_2$ should be chosen
in such a way that they preserve their relative normalization. Next a compatible metric can be introduced on $M$ as
\bea
g=a_1 \pi^*\kappa_{\cal M}\otimes \lambda+a_1 \lambda \otimes \pi^*\kappa_{\cal M}+ a_3 \pi^* g_{\cal M}~,
\eea
where $a_3>0$ is a function of $M$ and $\lambda$ is any 1-form on $M$ such that $\lambda(\partial_u)=1$.   It is straightforward to verify that by construction $M$ admits a null G-structure. For some null G-structures, like the Robinson structure described below, there is  a larger class of compatible metrics than the one described above.

\subsection{Symmetries of a null structure}

Let $M$ admit a $H_L^+(K)$-structure with fundamental  forms $\kappa$ and $\chi$.
A diffeomorphism $\Phi$
of $M$ preserves the null structure iff $\Phi^*\kappa= f_1\, \kappa$ and $\Phi^*\chi=f_2\, \chi$,
where $f_1$ and $f_2$ are spacetime functions which may depend on $\Phi$. In particular, the infinitesimal diffeomorphisms generated by a vector field $W$ preserve the null structure, iff
\bea
{\cal L}_W \kappa= a\,\kappa~,~~~{\cal L}_W\chi= b\, \chi~,
\label{sym1}
\eea
where $a,b$ are again functions of $M$.
\begin{prop}
Let $\chi=\kappa\wedge \omega$. If $W$ preserves the $H_L^+(K)$-structure on $M$ with fundamental forms  $\kappa$ and $\chi$, then
\bea
\hat{\cal L}_W\omega= p\, \omega
\eea
for some spacetime function $p$.
\end{prop}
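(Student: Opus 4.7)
The plan is to compute $\mathcal{L}_W\chi$ by expanding the relation $\chi=\kappa\wedge\omega$ via the Leibniz rule for the Lie derivative, and then match the result against the hypothesis $\mathcal{L}_W\chi=b\,\chi$. Before doing this, I would briefly note that the operator $\hat{\mathcal{L}}_W$ is well defined on the class $\hat\omega$: the hypothesis $\mathcal{L}_W\kappa=a\,\kappa$ gives exactly $\mathcal{L}_W\kappa\equiv 0\bmod\kappa$, which is the condition established earlier in the paper for $\hat{\mathcal{L}}_W$ to descend to forms modulo $\kappa$.

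For the main computation, applying the Leibniz rule gives
\begin{equation*}
\mathcal{L}_W\chi=(\mathcal{L}_W\kappa)\wedge\omega+\kappa\wedge\mathcal{L}_W\omega=a\,\kappa\wedge\omega+\kappa\wedge\mathcal{L}_W\omega,
\end{equation*}
while by hypothesis $\mathcal{L}_W\chi=b\,\chi=b\,\kappa\wedge\omega$. Equating the two expressions yields
\begin{equation*}
\kappa\wedge\bigl(\mathcal{L}_W\omega-(b-a)\,\omega\bigr)=0.
\end{equation*}

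At this point I would invoke the proposition proved earlier in the excerpt (that every null form, i.e.\ every form $\alpha$ with $\kappa\wedge\alpha=0$, is itself of the shape $\kappa\wedge\beta$). Applied to $\alpha=\mathcal{L}_W\omega-(b-a)\omega$, this gives a form $\gamma$ with $\mathcal{L}_W\omega-(b-a)\omega=\kappa\wedge\gamma$, i.e.\ $\mathcal{L}_W\omega\equiv (b-a)\,\omega\bmod\kappa$. Passing to classes, this reads $\hat{\mathcal{L}}_W\omega=(b-a)\,\omega$, so the statement holds with $p=b-a$.

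The only subtlety, and the place I would be most careful, is the domain on which $\hat{\mathcal{L}}_W\omega$ is being considered. Since $\omega$ is only determined by $\chi$ up to addition of a term of the form $\kappa\wedge(\cdot)$, the identity $\mathcal{L}_W\omega=(b-a)\omega+\kappa\wedge\gamma$ should not be read as a statement about a specific representative but as a statement about the class modulo $\kappa$; this is precisely what the modded Lie derivative $\hat{\mathcal{L}}_W$ encodes, and so no further work is needed. No hard estimates or transversality arguments are required—the whole proof is a Leibniz expansion followed by a single application of the null-form decomposition lemma.
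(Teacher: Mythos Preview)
Your proof is correct and follows the same approach as the paper: expand $\mathcal{L}_W\chi$ via the Leibniz rule, use $\mathcal{L}_W\kappa=a\,\kappa$ and $\mathcal{L}_W\chi=b\,\chi$, and read off $\hat{\mathcal{L}}_W\omega=(b-a)\,\omega$ modulo $\kappa$. If anything, you are more explicit than the paper in invoking the null-form decomposition lemma and in noting that $\hat{\mathcal{L}}_W$ is well defined on the class of $\omega$.
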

\begin{proof}
Indeed
\bea
{\cal L}_W \chi={\cal L}_W \kappa\wedge \omega+ \kappa\wedge {\cal L}_W \omega={\cal L}_W \kappa\wedge \omega+ \kappa\wedge \hat{\cal L}_W \omega= (a+p) \chi~,
\eea
and so $p=b-a$.
\end{proof}

Apart from preserving the forms $\kappa$ and $\chi$, one may also require that $X$ is invariant under $W$ up to a conformal rescaling.
In such a case, an additional condition can be imposed as
\bea
[W,X]=c\, X~, \label{sym2}
\eea
for some spacetime function $c$.  The condition (\ref{sym2}) is not implied by the first condition in (\ref{sym1}) as an additional
restriction should be imposed on the metric which may not necessarily be   satisfied.

\subsection{Examples}

A large class of examples that includes many of the black holes and brane solutions in all dimensions can be described with the metric
\bea
g=-A^2(r) dt^2+ B^2(r) dr^2+ C^2(r) g(\Sigma)~,
\eea
where $t$ is the time coordinate, $r$ is a radial coordinate and $g(\Sigma)$ is the metric on a Riemannian manifold $\Sigma$.
To bring this metric into desirable form first change coordinates to $u^*=r^*, t=v\pm r^*$, where $dr^*= \pm B(r) A^{-1}(r) dr$, to yield
\bea
g=\mp 2 A^2(u^*) dv(du^*\pm {1\over2} dv)+C^2(u^*) g(\Sigma)~.
\eea
This has the desirable form but it often  convenient to introduce a coordinate $u$ such that  $du=\mp A^2(u^*) du^*$.  The metric then becomes
\bea
g=2 dv (du\pm {1\over2} D(u) dv)+ C^2(u) g(\Sigma)~.
\label{formmetr}
\eea
Clearly $X=\partial_u$, $\kappa=dv$ and $\lambda =du\pm {1\over2} D(u) dv$.  For the $n$-dimensional Schwarzschild  black hole with $n>3$,  one finds
$D(u)=\mp (1-2M u^{3-n})$, $C^2(u)=u^2$  and $g(\Sigma)=g(S^{n-2})$ is the round metric of $S^{n-2}$. Hence one has
\bea
g= 2 dv \left(du-{1\over2} (1-2M u^{3-n}) dv\right)+ u^2 g(S^{n-2})~.
\label{sch}
\eea
 While for the $n$-dimensional
Reissner–Nordstr\"om black hole with $n>3$,   one can show that
\bea
g=2 dv \left(du-{1\over2} {u^{n-2}-2M u+Q^2 u^{-n+4}\over u^{n-2}} dv\right)+ u^2 g(S^{n-2})~.
\label{rn}
\eea
Most metrics of brane and intersecting brane solutions in all dimensions have been brought into a similar   form to (\ref{formmetr})  in \cite{mbjffgp}
and the formulae will not be repeated here.

All the above solutions admit by construction a $H^+_L(SO(n-2))$-structure at least in the region of validity of the described coordinates.
Clearly all the examples above come with a chosen splitting of the tangent bundle.

The Kerr black hole solution admits a $H^+_L(U(1))$-structure.  In fact it is associated with a Robinson structure which we shall investigate in the next section.  This has been instrumental in the discovery of the solution \cite{kerrbh}. More examples of black hole solutions which can be shown to admit  null G-structures in various
dimensions can be found in \cite{mason}.

Another class of examples with a null G-structure are all supersymmetric solutions for which the isotropy group of a Killing spinor, $\epsilon$,
in the connected component of $Spin(n-1,1)$ is non-compact. All such solution are characterized by the existence of a nowhere vanishing null  Killing vector field $X$ such that $\slashed{X}\epsilon=0$, where $\slashed{X}$ is the Clifford algebra element associated to $X$.  Moreover the 1-form $\kappa$ is identified with the Dirac current of $\epsilon$,  $\kappa(X)=\langle\epsilon, \slashed{X}\epsilon\rangle_D=0$, where
$\langle\cdot, \cdot\rangle_D$ is the Dirac inner product. For all such backgrounds ${\cal L}_X \kappa=0$.  Furthermore,
 all the fundamental forms $\chi$ which are constructed as spinor bilinears of $\epsilon$ satisfy $\kappa\wedge \chi=0$, $i_X\chi=0$ and
 ${\cal L}_X \chi=0$.  Clearly $X$ generates a null geodesic congruence  in $M$. In addition  all such fundamental forms $\chi$ are invariant under $X$ and so they are the pull back of forms on the space of orbits ${\cal M}$ of the null geodesics.  No further conformal normalization of $\chi$ is necessary for this.

\section{Null $H^+_N(U(k))$- and $H^+_N(SU(k))$-structures }

\subsection{Null $H^+_N(U(k))$-structures and Robinson manifolds}

\subsubsection{Robinson manifolds}

Lorentzian manifolds with a $H^+_N(U(k))$-structure  are characterized by the existence
of two nowhere vanishing forms $\kappa$ and $\chi$, where $\chi$ after choosing a splitting and using proposition \ref{splitprop},  can be represented as $\kappa\wedge \omega$ with    $\omega$
an almost Hermitian form on $T$.  Furthermore $\omega$ and $g_T$ induce an almost complex structure on $T$.

On the other hand it is known for sometime that (almost) Robinson manifolds are closely associated to the existence
of an (almost) complex structure on a Lorentzian manifold.  To give the definition of these manifolds \cite{robinson}, see also \cite{Hughston:1988nz, nt},
recall that a  subspace $W$ of a Lorentzian vector space $V$ is null or totally null,  iff $W\cap W^\perp\not=\emptyset$ or $W\subset W^\perp$, respectively.  Furthermore $W$ is maximally totally null (MTN), iff $W=W^\perp$.

\begin{definition}
  Let $(M,g)$ be an even dimensional Lorentzian manifold. $(M,g)$ is an almost Robinson manifold, iff the complexified tangent bundle $TM\otimes\bC$ admits a subbundle $W$ whose fibre $W_p$ is  MTN subspace of $T_pM\otimes \bC$, $p\in M$.
\end{definition}

The definition of almost Robinson manifolds  makes some use of the spacetime metric which is used to define $N^\perp$. This is unlike the definition of
almost complex manifolds.  Of course almost complex manifolds are always almost Hermitian.  This is because   given an almost complex structure $I$, one can always find a Hermitian metric $h$ with respect to $I$, $h(U,V)=h(IU,IV)$,  by setting $h(U, V)=g(U,V)+g(IU, IV)$ for any metric $g$.
Nevertheless the definition of almost complex manifolds does not involve the metric.

\begin{prop}
Oriented and time oriented almost Robinson manifolds $(M,g)$  are Lorentzian manifolds with an $H^+_N(U(k))$ structure.
\end{prop}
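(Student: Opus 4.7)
The plan is to extract from the MTN subbundle $W\subset TM\otimes\bC$ a nowhere vanishing real null direction and a compatible almost complex structure on the screen quotient $N^\perp/N$, and then to assemble these into the fundamental forms $\kappa$ and $\chi=\kappa\wedge\omega$ of an $H^+_L(U(k))$-structure.

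First I would build the null line bundle. Set $N_\bC\defeq W\cap \bar W$; it is closed under complex conjugation, so $N_\bC=N\otimes\bC$ for a real subbundle $N\subset TM$. Since $W$ is MTN, $N_\bC\subset W=W^\perp$, so $g$ vanishes on $N$; in Lorentzian signature a totally null real subspace has dimension at most one, giving $\dim_\bR N\le 1$. For the opposite inequality I would first establish $W+\bar W=N_\bC^\perp$ by dimension counting from $\dim_\bC W=k+1$; if moreover $N_\bC=0$ then $TM\otimes\bC=W\oplus\bar W$ would make the operator $J$ equal to $\pm i$ on $W$ and $\bar W$ a real isometric almost complex structure on $(TM,g)$, which is impossible in signature $(2k+1,1)$ since any such $J$ would force $g$ to be the real part of a complex Hermitian form and hence to have even-even signature. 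Hence $\dim_\bR N=1$; the time orientation picks a component of $N\setminus 0$ and produces the conformal class of a nowhere vanishing null vector field $X$ together with its dual $\kappa$.

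Next I would manufacture the $U(k)$-structure on $N^\perp/N$. Since $N_\bC\subset W$ gives $W\subset N_\bC^\perp$, the quotient $\hat W\defeq W/N_\bC$ is a rank-$k$ complex subbundle of $N_\bC^\perp/N_\bC\cong (N^\perp/N)\otimes\bC$. The two identities $W+\bar W=N_\bC^\perp$ and $W\cap\bar W=N_\bC$ yield the direct sum $(N^\perp/N)\otimes\bC=\hat W\oplus\overline{\hat W}$, which is the $(\pm i)$-eigenspace decomposition of a unique real almost complex structure $J$ on the rank-$2k$ bundle $N^\perp/N$. For $U,V\in N^\perp/N$, writing $U=u+\bar u$, $V=v+\bar v$ with $u,v\in \hat W$ and using total nullity of $\hat W$ and $\overline{\hat W}$ gives $g_T(u,v)=g_T(\bar u,\bar v)=0$, hence $g_T(JU,JV)=g_T(u,\bar v)+g_T(\bar u,v)=g_T(U,V)$, where $g_T$ is the positive-definite fibre metric induced on $N^\perp/N$. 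Therefore $\omega(U,V)\defeq g_T(JU,V)$ is an almost Hermitian form on $N^\perp/N$.

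Finally, picking any splitting $\lambda$ realises $N^\perp/N$ as the screen bundle $T$ and lifts $\omega$ to a section $\phi\in\Gamma(\Lambda^2 T)$; by proposition \ref{splitprop} the pair $(\kappa,\,\kappa\wedge\phi)$ satisfies the conditions (\ref{chicon}) and is the set of fundamental forms of an $H^+_L(U(k))$-structure, different splittings altering only the lift of $\omega$ by terms annihilated by $\kappa\wedge(\cdot)$. The main obstacle I anticipate is the dimension count $\dim_\bC N_\bC=1$: both inequalities hinge on the Lorentzian signature being $(2k+1,1)$ and not of even type, and the key auxiliary identity $W+\bar W=N_\bC^\perp$ must be extracted from the MTN property together with $\dim_\bC W=k+1$.
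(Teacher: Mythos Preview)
Your proposal is correct and follows essentially the same route as the paper: extract the real null line $N$ from $W\cap\bar W$, obtain an almost complex structure on $N^\perp/N$ from the splitting $\hat W\oplus\overline{\hat W}$, verify Hermiticity from the MTN condition, and package the result as $\kappa$ and $\chi=\kappa\wedge\omega$. The only difference is one of detail rather than strategy: the paper simply asserts $W\cap\bar W=N\otimes\bC$ with $N$ a trivial real line bundle and $W+\bar W=N^\perp\otimes\bC$ as consequences of the definition, whereas you supply the dimension count and the signature obstruction ruling out $N_\bC=0$; likewise the paper chooses a splitting first and works in $T$, while you work in the quotient $N^\perp/N$ and lift at the end, but these are equivalent presentations.
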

\begin{proof}

A consequence of the definition of a Robinson manifold is that there is a real line bundle $N$ such that $W\cap \bar W=N\otimes \bC$ and $W+\bar W= N^\perp\otimes \bC$.  Moreover $N$ is topologically trivial as the spacetime is oriented and  time oriented. Furthermore, we have that
\bea
0\rightarrow N\otimes \bC\rightarrow W\rightarrow W/N\otimes \bC\rightarrow 0~,
\eea
and
\bea
0\rightarrow N\rightarrow N^\perp \rightarrow N^\perp /N\rightarrow 0~.
\eea
Suppose now that the two sequences above split and so in particular $W=N\otimes \bC\oplus R$ and  $N^\perp=N\oplus T$.   Moreover $R$ is an almost holomorphic subbundle of $T\otimes \bC$ and $T\otimes \bC=R\oplus \bar R$.  An almost complex structure is defined on $T$ as $I(v+\bar v)= i(v-\bar v)$, where $v$ is a section of $R$.  Next notice that the metric $g_T=g\vert_T$ on $T$ is Hermitian with respect to $I$.  This is because $W_p$ are MTN. As a result, one can define an almost Hermitian form, $\omega$,  on $T$, as $\omega(v_1, v_2)=g(v_1, I v_2)$.   In turn,    the 3-form $\chi=\kappa\wedge \omega$ is globally defined on $M$ and together with $\kappa(Z)=g(Z,X)$, where  $X$ is no-where vanishing section of $N$, define a $H^+_N(U(k))$ structure
on $M$.  The converse is also true.
\end{proof}

There is a notion of integrability of  almost Robinson structure in parallel to that of an almost complex structure, see \cite{robinson, Hughston:1988nz, nt}.
\begin{definition}
 $(M,g)$ is a Robinson manifold, iff $(M,g)$ is an almost Robinson manifold and
 \bea
[\Gamma(W), \Gamma(W)] \subset \Gamma(W)~,
\label{inconx}
\eea
where $[\cdot, \cdot]$ is the standard Lie bracket and $\Gamma(W)$ are the smooth sections of $W$.
\end{definition}

The condition (\ref{inconx})  is an integrability condition similar to that of the Nijenhouis condition for complex manifolds.  It is often convenient to express it in terms of forms.  For this define  the bundle $W^0$ whose fibres  are
$W_p^0=\{ \alpha\in T^*M\otimes\bC\vert ~\alpha(v)=0 ~\forall ~v\in W_p\}$, $p\in M$.
Then (\ref{inconx}) can also be expressed as
\bea
d\alpha= \beta\wedge \gamma~,
\eea
where $\alpha,\beta\in \Gamma(W^0)$ while $\gamma\in \Gamma(\Lambda^1(M))\otimes\bC$.  This integrability condition in terms of a local coframe
 $\{\fe^-, \fe^+, \fe^\alpha, \fe^{\bar \alpha}\}$  can be written as
\bea
d\fe^-=\fe^-\wedge \rho+ i h_{\alpha\bar\beta}\, \fe^\alpha\wedge \fe^{\bar\beta}~,~~~d\fe^{\bar\alpha}= \fe^-\wedge \mu^{\bar\alpha}+ \fe^{\bar\beta}\wedge \tau^\alpha{}_{\bar\beta}~,
\label{frameintcon}
\eea
where $\rho, \mu, \tau$ are spacetime 1-forms.

\begin{prop}
On Robinson manifolds the null vector field $X$ associated to $\kappa=\fe^-$  generates a null geodesic congruence.
\end{prop}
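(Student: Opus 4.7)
The plan is to invoke Proposition \ref{nullgeo}, which reduces the task to verifying the condition $\kappa\wedge \mathcal{L}_X\kappa=0$. Since $\kappa=\fe^-$ and $X=\fe_+$ in a local coframe of the type described in (\ref{locframe}), we have $i_X\kappa=\kappa(X)=g(X,X)=0$, so $\mathcal{L}_X\kappa=i_X d\kappa$; the problem thus reduces to showing that $i_X d\kappa$ is proportional to $\kappa$.

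To do this, I would feed the first of the Robinson integrability equations (\ref{frameintcon}),
\begin{equation*}
d\fe^{-}=\fe^{-}\wedge \rho+ i\, h_{\alpha\bar\beta}\,\fe^{\alpha}\wedge \fe^{\bar\beta},
\end{equation*}
into the computation of $i_X d\kappa$. Using that $X=\fe_+$ is dual to $\fe^+$, so that $i_X\fe^-=i_X\fe^{\alpha}=i_X\fe^{\bar\alpha}=0$ while $i_X\fe^+=1$, the contraction of each term is immediate. The quadratic term $ih_{\alpha\bar\beta}\,\fe^{\alpha}\wedge\fe^{\bar\beta}$ is annihilated by $i_X$, whereas $i_X(\fe^-\wedge\rho)=-\fe^-\,i_X\rho=-(i_X\rho)\,\kappa$. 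Hence $i_X d\kappa=-(i_X\rho)\,\kappa$, a multiple of $\kappa$.

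Wedging with $\kappa$ then gives $\kappa\wedge i_X d\kappa=0$, and combining this with $i_X\kappa=0$ yields $\kappa\wedge \mathcal{L}_X\kappa=0$. Proposition \ref{nullgeo} now directly concludes that the integral curves of $X$ are null geodesics, i.e.\ $X$ generates a null geodesic congruence.

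The only potentially delicate point is conceptual rather than computational: unlike ordinary complex manifolds, the Robinson integrability condition does not force $\kappa\wedge d\kappa=0$, because the $ih_{\alpha\bar\beta}\fe^{\alpha}\wedge\fe^{\bar\beta}$ piece survives wedging with $\fe^-$. What makes the argument work is that this transverse piece is annihilated by $i_X$, so only the $\fe^-\wedge\rho$ term contributes to $i_X d\kappa$, and that term is automatically a multiple of $\kappa$. Thus the Robinson condition is precisely tailored to deliver the weaker---but sufficient---statement needed for Proposition \ref{nullgeo}.
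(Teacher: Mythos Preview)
Your proof is correct and follows essentially the same route as the paper: both reduce to Proposition~\ref{nullgeo} by computing ${\cal L}_X\kappa=i_X d\kappa$ via the first integrability condition in (\ref{frameintcon}), observing that the transverse $ih_{\alpha\bar\beta}\fe^\alpha\wedge\fe^{\bar\beta}$ term is killed by $i_X$ so that ${\cal L}_X\kappa=-(i_X\rho)\kappa$. Your additional remark distinguishing this from the stronger condition $\kappa\wedge d\kappa=0$ is a nice clarification but does not alter the argument.
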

\begin{proof}
This has been demonstrated, see e.g. \cite{nt} and references within, and the proof will be repeated here for completeness. It suffices to show that (\ref{kappainv}) holds.  Indeed using the first condition in (\ref{frameintcon}), one finds that
\bea
{\cal L}_X \kappa=i_X d\kappa=- i_X\rho\, \kappa~,
\eea
and so $\kappa\wedge {\cal L}_X \kappa=0$. Then the result follows from proposition \ref{nullgeo}.
\end{proof}

It is well-known that many 4-dimensional solutions admit a Robinson structure.  In particular for the black hole solutions (\ref{sch}) and (\ref{rn}), one has
\bea
\fe^-=dv~,~~~\fe^{\bar 1}=\sqrt{2}\,u\, {d\bar z\over 1+z\bar z}~,
\eea
where $z$ is the inhomogeneous  complex coordinate on $S^2=\bC P^1$ and $\sqrt2$ appears for $S^2$  to have radius 1.   Observe that the integrability condition (\ref{frameintcon}) is satisfied.

\subsubsection{Geometry  of Robinson manifold null hypersurfaces}

Before we proceed to investigate the geometry of null  hypersurfaces
in Robinson spacetimes, it is helpful to state the definition of Cauchy-Riemann (CR) structures.

\begin{definition}
 A 2k+1-manifold ${\cal M}$ admits an almost Cauchy-Riemann (CR) structure, iff $T{\cal M}\otimes\bC$ admits a rank $k$ subbundle ${\cal W}$ such that at every point $p\in {\cal M}$,  ${\cal W}_p\cap \bar{\cal W}_p=\{0\}$. Moreover ${\cal M}$ admits a CR structure, iff in addition
 \bea
[\Gamma({\cal W}), \Gamma({\cal W})] \subset \Gamma({\cal W})~.
\label{incon}
\eea
\end{definition}
Any hypersurface in a complex manifold admits a CR structure\footnote{ It is not always the case that a manifold with a CR-structure
can be realized as a hypersurface of a complex manifold \cite{ln}. In modern terminology, the non-realizable CR manifolds  are not
``holographic'' in the context of complex geometry.  However those that are solutions of the vacuum Einstein equations are \cite{jlpnjt}. }.

\begin{prop}
Any null hypersurface ${\cal H}$ in a Robinson manifold transversal to the geodesic congruence generated by $X$ admits a CR structure.
\end{prop}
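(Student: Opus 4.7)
The plan is to produce the complex subbundle by intersection: set ${\cal W} := W \cap (T{\cal H}\otimes \bC)$, where $W\subset TM\otimes \bC$ is the maximally totally null (MTN) subbundle providing the Robinson structure on $M$, and then check the two conditions of the definition, namely the rank/transversality condition ${\cal W}_p\cap \bar{\cal W}_p=\{0\}$ with $\mathrm{rank}_{\bC}{\cal W}=k$, and the involutivity $[\Gamma({\cal W}),\Gamma({\cal W})]\subset \Gamma({\cal W})$.

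First I would compute the rank. Since $n=2k+2$, the MTN condition $W=W^\perp$ forces $\dim_{\bC} W = k+1$, and $T{\cal H}\otimes \bC$ has complex dimension $2k+1$. The bundle ${\cal W}$ is the kernel of the restriction map $W \to (TM\otimes \bC)/(T{\cal H}\otimes \bC)$, whose target is one-dimensional. Hence $\dim_{\bC}{\cal W} \ge k$. Transversality of $X$ to ${\cal H}$, combined with $X\in\Gamma(N)\subset\Gamma(W)$, shows the restriction map is surjective onto the one-dimensional quotient, so $\dim_{\bC}{\cal W}=k$ pointwise, matching the rank required for a CR structure on a $(2k+1)$-dimensional manifold.

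Next I would verify the non-degeneracy condition. By the Robinson decomposition $W\cap \bar W = N\otimes \bC = \bC X$. Therefore
\begin{equation*}
{\cal W}\cap \bar{\cal W} \;=\; (W\cap\bar W)\cap (T{\cal H}\otimes \bC)\;=\; \bC X \cap (T{\cal H}\otimes \bC)\;=\;\{0\},
\end{equation*}
again directly from the transversality of $X$ to ${\cal H}$. This yields an almost CR structure on ${\cal H}$.

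Finally I would establish integrability. Take $U,V\in \Gamma({\cal W})$. Since ${\cal W}\subset W$, the Robinson integrability $[\Gamma(W),\Gamma(W)]\subset \Gamma(W)$ gives $[U,V]\in \Gamma(W)$. Since $U$ and $V$ are tangent to ${\cal H}$, so is $[U,V]$ (a standard fact about Lie brackets of vector fields tangent to a submanifold, applied after extending by linearity to complex combinations). Combining, $[U,V]\in \Gamma(W\cap (T{\cal H}\otimes \bC))=\Gamma({\cal W})$, which is exactly the CR integrability condition (\ref{incon}). The only genuine subtlety is the pointwise rank calculation, and that is handled by observing that the transversality of $X$ to ${\cal H}$ both trims $W$ down to the correct dimension and guarantees $\bC X \not\subset T{\cal H}\otimes \bC$; everything else is formal.
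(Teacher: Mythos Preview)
Your argument is correct and is a genuinely different route from the paper's. The paper builds ${\cal W}$ \emph{via the fundamental forms}: it uses the null normal ${\bf n}$ of ${\cal H}$ together with $i_{\bf n}\chi$ and the restricted metric to produce an almost Hermitian structure on the screen $T^{\cal H}$ (where $T{\cal H}={\cal N}\oplus T^{\cal H}$), and then takes ${\cal W}$ to be the $+i$-eigenbundle of the associated almost complex structure; integrability is read off from the pullback of the frame conditions (\ref{frameintcon}) to ${\cal H}$. You instead define ${\cal W}$ intrinsically as $W\cap (T{\cal H}\otimes\bC)$, use a clean dimension count driven by the transversality of $X$ to get the rank and the non-degeneracy ${\cal W}\cap\bar{\cal W}=\{0\}$, and obtain involutivity by combining the abstract Robinson integrability $[\Gamma(W),\Gamma(W)]\subset\Gamma(W)$ with the elementary fact that tangency to a submanifold is preserved under Lie bracket. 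Your approach is more coordinate-free and avoids any choice of splitting or frame; the paper's approach has the advantage of fitting the ``fundamental forms'' philosophy developed throughout and of making the induced almost complex structure on $T^{\cal H}$ explicit. The two ${\cal W}$'s coincide, since $W\subset N^\perp\otimes\bC$ forces $W\cap(T{\cal H}\otimes\bC)\subset T^{\cal H}\otimes\bC$, and there it agrees with the holomorphic eigenbundle of the complex structure constructed from $i_{\bf n}\chi$.
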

\begin{proof}
Let ${\bf n}$ be the normal to the hypersurface.  The spacetime metric $g$ restricted on ${\cal H}$ and $i_{\bf n}\chi$
define a fibre almost Hermitian structure on $T^{\cal H}$, $T{\cal H}={\cal N}\oplus T^{\cal H}$.  This can be used to decompose $T^{\cal H}\otimes \bC={\cal W}\oplus \bar{\cal W}$
where  the fibres of  ${\cal W}$ are the eigenvectors of the complex structure associated with the $i$ eigenvalue.
The integrability condition (\ref{incon}) also follows from that of the Robinson manifolds.  This is just the restriction of
(\ref{frameintcon}) on ${\cal H}$.
\end{proof}

\subsubsection{Geometry of orbit spaces}

Let ${\cal M}$ be the orbit space of null geodesics generated by $X$ in open subset $U$ of a Robinson manifold $M$ and $\pi$ the projection $\pi:~U\rightarrow {\cal M}$.

\begin{prop}
${\cal M}$ admits a CR-structure
\end{prop}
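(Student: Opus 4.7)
The plan is to take ${\cal W} := \pi_*(W)$ as the candidate CR subbundle of $T{\cal M} \otimes \bC$ and verify its three defining conditions: well-definedness with the correct rank $k$, the intersection property ${\cal W} \cap \bar{\cal W} = \{0\}$, and the involutivity $[\Gamma({\cal W}), \Gamma({\cal W})] \subset \Gamma({\cal W})$.

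First I would argue that $W$ descends to a subbundle of $T{\cal M} \otimes \bC$. Since $X \in \Gamma(N) \subset \Gamma(N \otimes \bC) \subset \Gamma(W)$ and $[\Gamma(W), \Gamma(W)] \subset \Gamma(W)$ by Robinson integrability, the flow of $X$ preserves $W$; and because $\ker \pi_*$ complexifies to $\bC \cdot X \subset W$, the quotient $W / (\bC \cdot X)$ descends to a well-defined subbundle ${\cal W} \subset T{\cal M} \otimes \bC$ of complex rank $(k+1) - 1 = k$, matching the CR requirement on a $(2k+1)$-manifold. For the intersection property, I would take $v \in {\cal W}_p \cap \bar{\cal W}_p$, pick $q \in \pi^{-1}(p)$, and write $v = \pi_* w = \pi_* \bar{w'}$ with $w \in W_q$ and $\bar{w'} \in \bar W_q$. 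Then $w - \bar{w'} \in \bC \cdot X_q \subset W \cap \bar W$, so $w$ itself lies in $W_q \cap \bar W_q = N_q \otimes \bC = \bC \cdot X_q$ after absorbing this correction, giving $v = \pi_* w = 0$.

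For the involutivity, given $V_1, V_2 \in \Gamma({\cal W})$, I would construct smooth $X$-invariant lifts $\tilde V_i \in \Gamma(W)$ with $[X, \tilde V_i] = 0$. Locally, choosing a hypersurface ${\cal S} \subset U$ transversal to the flow of $X$ (so that $\pi|_{\cal S}$ is a diffeomorphism onto an open subset of ${\cal M}$) provides a pointwise lift of $V_i|_{\pi({\cal S})}$ to a section of $W|_{\cal S}$, which I would then extend along the $X$-flow. Since $W$ is $X$-invariant, the extended $\tilde V_i$ remains a section of $W$ and satisfies $[X, \tilde V_i] = 0$ by construction. The Robinson integrability then yields $[\tilde V_1, \tilde V_2] \in \Gamma(W)$, while the Jacobi identity together with $[X, \tilde V_i] = 0$ gives $[X, [\tilde V_1, \tilde V_2]] = 0$, so that the bracket is $\pi$-projectable and $[V_1, V_2] = \pi_* [\tilde V_1, \tilde V_2] \in \pi_* \Gamma(W) = \Gamma({\cal W})$.

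The main obstacle lies in constructing smooth $X$-invariant lifts of arbitrary sections of ${\cal W}$ to $W$; this is overcome precisely because the flow of $X$ preserves $W$, a fact established in the first step from $X \in \Gamma(W)$ and the Robinson integrability condition.
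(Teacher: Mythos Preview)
Your proof is correct and follows essentially the same approach as the paper: set ${\cal W}=\pi_*W$, deduce ${\cal W}\cap\bar{\cal W}=\{0\}$ from $W\cap\bar W=N\otimes\bC$ and $\pi_*X=0$, and obtain involutivity of ${\cal W}$ from that of $W$ via $\pi_*$. The paper's proof simply asserts $\pi_*[A,B]=[\pi_*A,\pi_*B]$ without further comment, whereas you correctly recognise this requires $\pi$-related (i.e.\ $X$-invariant) lifts and supply their construction using the $X$-invariance of $W$; this is a genuine detail the paper glosses over, but the overall strategy is the same.
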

\begin{proof}
This has already been demonstrated in \cite{nt}. To show the statement  define the ${\cal W}=\pi_* W$ and $\bar{\cal W}=\pi_*\bar W$. These are subbundles of $T{\cal M}\otimes\bC$.
Moreover as $W\cap \bar W=N\otimes \bC$, $\pi_* X=0$ and the rank of $\pi_*$ is $2n-1$, we have that ${\cal W}_p\cap \bar{\cal W}_p=\{0\}$, $p\in {\cal M}$. Furthermore the
integrability of the sections of ${\cal W}$ follows from that of $W$ as $\pi_*[A,B]=[\pi_*A, \pi_*B]$.

\end{proof}

 Generically for $n>4$, ${\cal M}$ does not inherit   the fundamental forms associated with the Robinson structure on $U$ unless they are preserved by the flow of $X$. As $\kappa$ is preserved by the flow as a consequence of the integrability condition of the Robinson structure, there is a 1-form $\kappa_{\cal M}$ such that $a^{-1}\, \kappa=\pi^*\kappa_{\cal M}$ for some nowhere vanishing function $a$ on $M$. Similarly if $\chi=\kappa\wedge \omega$ is preserved
by the flow, there is a 3-form $\chi_{\cal M}$ such that $ b^{-1}\, \chi=\pi^* \chi_{\cal M}$, where $b$ is a strictly positive function on $U$.  As $\pi^*$ is an inclusion $\kappa_{\cal M}\wedge \chi_{\cal M}=0$, and so $\chi_{\cal M}=\kappa_{\cal M}\wedge \omega_{\cal M}$.  Furthermore if $g_T$ is preserved by the flow, one can define a fibre metric $g_{\cal M}$ on real part of ${\cal W}\oplus \bar{\cal W}$.  Then up to
an appropriate conformal rescaling of either $g_{\cal M}$ or $\omega_{\cal M}$ one can define a fibre hermitian structure such that
the holomorphic fibres are ${\cal W}_p$, see also section \ref{sec:gorbit}.

The $n=4$ case is special and both the fundamental forms $\kappa$ and $\chi$ are preserved by the flow generated by $X$ as a consequence
of the integrability conditions (\ref{frameintcon}).  As a result ${\cal M}$ apart from the CR structure also admits a globally defined 1- and 3-forms.

The Robinson manifolds $M$ for which all the fundamental forms and $g_T$ can be recovered from data on ${\cal M}$ up to appropriate
conformal rescaling locally can be written as $\bR\times {\cal M}$, where $X=\partial_u$ and $u$ a coordinate of $\bR$.  In particular,
the spacetime metric is
\bea
g= a\, \pi^*\kappa_{\cal M}\otimes \lambda+ a\, \lambda\otimes \pi^*\kappa_{\cal M}+ p\, \pi^* g_{\cal M}
\eea
where  $a , p>0$  are nowhere vanishing functions on the spacetime and $\lambda$ is any 1-form on the spacetime with $\lambda(\partial_u)=1$.  A more general construction of Robinson manifolds from the CR-structure on ${\cal M}$, $n>4$,  is given in \cite{nt}. Indeed if the integrability condition of the CR structure on ${\cal M}$ is expressed in terms of a local frame as in (\ref{frameintcon}), then the most general metric compatible with the induced Robinson structure on $M=\bR\times {\cal M}$ is
\bea
ds^2= a\, \pi^*\kappa_{\cal M}\otimes \lambda+ a\, \lambda\otimes \pi^*\kappa_{\cal M}+  q_{\alpha\bar\beta} \left(\pi^* e^\alpha \otimes \pi^*e^{\bar\beta}+ \pi^*e^{\bar\beta}\otimes \pi^* e^\alpha\right)~,
\eea
where $\overline{(q_{\alpha\bar\beta})}= q_{\beta\bar\alpha}$,  and both $a$ and $(q_{\alpha\bar\beta})$ depend on all coordinates.
However, the $\chi$ fundamental form on $M$ will  not always be the pull back of a form on ${\cal M}$.

We conclude this section on the Robinson structure with the remark that there are many refinements of the almost Robinson structure on $M$ as well as those induced on the hypersurfaces ${\cal H}$ and the orbit spaces ${\cal M}$.  These can be investigated with similar techniques to those developed  by Gray and Hervella \cite{grayhervella} for exploring the different   classes of almost Hermitian manifolds.  In particular, the covariant derivatives of the fundamental forms
 \bea
 \nabla \kappa~,~~~~\nabla \chi~,
 \label{gheqn}
 \eea
 where $\chi=\kappa\wedge^p\omega$, can be decomposed in irreducible representations of $H^+_L(U(k))$ or even $U(k)$, $n=2k+2$. The different classes are then identified depending on which components of (\ref{gheqn}) are non-vanishing in such a decomposition.  Some of these calculations have already been done in a similar Lorentzian context in  \cite{jguggp, het}, see also \cite{atc1,atc2}. Also several special structures can arise like  contact and Sasakian structures  on ${\cal H}$ and ${\cal M}$.

\subsection{Null $H^+_L(SU(k))$-structure}\label{sec:suk}

The $H^+_L(SU(k))$ structure apart from the two fundamental forms $\kappa$ and $\kappa\wedge \omega$ of the $H^+_L(U(k))$ structure also
exhibits a third fundamental form $\kappa\wedge\epsilon$, where $\epsilon$ is the holomorphic volume fundamental $(k,0)$-form of $SU(k)$, $n=2k+2$.
Furthermore, there are two relative normalization conditions that these forms satisfy
\bea
\kappa\wedge\omega\wedge\epsilon=0~,~~~ \kappa\wedge \omega^k= m(k)\,\kappa\wedge \epsilon\wedge \bar\epsilon~,
\label{2norm}
\eea
where $m(k)\not=0$ is a numerical normalization factor that depends on conventions whose precise value is not essential for the analysis that follows.
As the implications of the existence of $\kappa$ and $\kappa\wedge \omega$  on the geometry of the spacetime $M$ have already been investigated in the context of almost Robinson manifolds, the focus is on the properties of $\kappa\wedge \epsilon$.
It is clear that $\kappa\wedge \epsilon$ can be restricted on any null transversal  hypersurface ${\cal H}$.  This together with the two other fundamental forms induce a $\bR^+\times SU(k)$ structure on ${\cal H}$.

Furthermore if $\kappa$, $\kappa\wedge \omega$ and $\kappa\wedge \epsilon$ are preserved by the flow of $X$, they all, up to an appropriate
conformal rescaling, are the pull-back of appropriate forms on ${\cal M}$.  However this is not sufficient for ${\cal M}$ to admit a $H^+_L(SU(k))$-structure. For this
 the two normalization conditions (\ref{2norm}) must also be preserved by the flow.  In particular if ${\cal L}_X(\kappa\wedge \omega)=b\, \kappa\wedge \omega$ and ${\cal L}_X (\kappa\wedge \epsilon)= c\, \kappa\wedge \epsilon$, then for ${\cal M}$ to admit
a  $H^+_L(SU(k))$-structure, it is required that $k\,b=c+\bar c$.

\section{Symmetries of null structures}

To investigate the symmetries of the $H^+_L(K)$-structures, we shall first adapt a coordinate to $X$, $X=\partial_u$, and assume throughout that $\kappa\wedge d\kappa=0$. The latter implies that $\kappa=h dv$, for some coordinate $v$, where $h$ depends on all spacetime coordinates.  In such case, the most general spacetime vector field can be written as $W=W^u\partial_u+W^v \partial_v+ W^I \partial_I$,
where $y^I$ are the remaining spacetime coordinates.  In what follows, we shall identify the Lie algebra of $W$'s that preserve
a $H^+_L(K)$-structure for $K=SO(n-2)$ and $K=U(k)$, $n=2k+2$.

\subsection{Symmetries of the $H^+_L(SO(n-2))$ structure}

To identify the Lie algebra of  $W$s that preserve the $H^+_L(SO(n-2))$ structure, it seems reasonable to assume that $W$ preserve $X$, i.e. $[W,X]= a X$.  Then it is straightforward to deduce that
\bea
W=W^u(u,v,y) \partial_u+ W^v(v,y)\partial_v+ W^I(v, y) \partial_I~,
\label{invvf0}
\eea
where the dependence of vector fields $W$ on the various coordinates is implicitly  given.
From now on, we shall always take that the vector fields that generate the symmetries of the null structure are of the form (\ref{invvf0}).
Of course if the function $a$ vanishes, then $W^u=W^u(v,y)$.  However, we allow in what follows the component $W^u$ to depend on $u$.

Suppose that in addition $W$ preserves also $\kappa$, ${\cal L}_W\kappa= b \kappa$.  In such a case, one finds that
\bea
W=W^u(u,v,y) \partial_u+ W^v(v)\partial_v+ W^I(v, y) \partial_I~.
\label{invvf1}
\eea
Furthermore the invariance of the fundamental  (n-1)-form $\chi$ of $H^+_L(SO(n-2))$, ${\cal L}_W \chi= c\chi$, for a function $c$,
does not impose any additional conditions on $W$.  The component $W^v\partial_v$ generates the diffeomorphisms of a real line with coordinate $v$.  Let as denote the Lie algebra with $\mathfrak{diff}(\bR)$.  Similarly the component $W^I\partial_I$ generates the infinitesimal diffeomorphisms in the $y$ coordinates, which can be thought of as the coordinates of a manifold ${\cal S}$,  but now depend on $v$.  This can be thought as the Lie algebra of a path group of the diffeomorphism group of $y$'s which we denote with $\mathfrak{pdiff}({\cal S})$.
Similarly the component $W^u\partial_u$ can be thought as generating diffeomorphisms of the coordinate $u$ depending on both the $v$ and $y$ coordinates.  Denoting the Lie algebra of these with $\mathfrak{conf}(X)$, as they are a conformal rescaling of $X$, we have that the Lie algebra of symmetries is $\mathfrak{diff}(\bR)\oplus_s(\mathfrak{pdiff}({\cal S})\oplus_s \mathfrak{conf}(X))$, where $\oplus_s$ denotes semi-direct sum.

To get a bit more insight into these symmetries, suppose that the spacetime is  asymptotically flat  with a Killing horizon
at $u=u_h$, $g(\partial_v, \partial_v)\vert_{u=u_h}=0$. These conditions are met by the black hole solutions (\ref{sch}) and (\ref{rn}).  At the horizon hypersurface ${\cal H}$ the tangential component of the vector field (\ref{invvf1}) is
\bea
W^{\cal H}=W^v(v)\partial_v+ W^I(v, y) \partial_I~.
\eea
As $\partial_v$ is normal to ${\cal H}$, $\partial_v$ generates a null geodesic congruence in ${\cal H}$ whose space of orbits  is ${\cal S}$. ${\cal S}$ is the spatial horizon section of the horizon.  The Lie algebra of ${W^{\cal H}}$s  is $\mathfrak{diff}(\bR)\oplus_s\mathfrak{pdiff}({\cal S})$. If ${\cal S}$
is a sphere, ${\cal S}=S^{n-2}$, this algebra includes the Lorentz Lie algebra $\mathfrak{so}(n-1,1)$ as the conformal transformations of $S^{n-2}$.

On the other hand at infinity, the null hypersurface ${\cal I}$ which is transversal to the geodesic congruence generated by $X$ is given in all examples by the linear equation $pu+q v=0$, where $p,q\in\bR-\{0\}$, e.g. for the black hole solutions (\ref{sch}) and (\ref{rn}) $-2u+v=0$.  Adapting a coordinate $s$  along the asymptotic null infinity hypersurface ${\cal I}$, say $s=p u-q v$,
one finds that the component of (\ref{invvf1}) (and (\ref{invvf0})) tangential to ${\cal I}$ and  restricted on ${\cal I}$ read
\bea
W^{\cal I}=W^s(s, y)\partial_s+ W^I(s, y) \partial_I~.
\eea
The Lie algebra of $W^{\cal I}$s is the Lie algebra of the infinitesimal diffeomorphisms of ${\cal I}$, $\mathfrak{diff}({\cal I})$.     Therefore the spacetime diffeomeorphisms that leave the $H^+_L(SO(n-2))$-structure invariant interpolate between
 the symmetries of the horizon hypersurface  and those of the null asymptotic infinity.  Again if ${\cal S}=S^{n-2}$, $\mathfrak{diff}({\cal I})$ includes the BMS Lie algebra.  It is curious though that the symmetries that preserve the whole $H^+_L(SO(n-2))$-structure do not
 include the supertranslations of the BMS Lie algebra at the horizon.  Though of course this will be the case if instead of the vector fields in (\ref{invvf1}) one considers the vector fields in (\ref{invvf0}) which preserve part of the
 $H^+_L(SO(n-2))$-structure.

We have seen that the Lie algebras which preserve the $H^+_L(SO(n-2))$-structure are much larger than either the Lorentz or the BMS  algebra even when they are restricted at the asymptotic null infinity. This is not surprising as all considerations do not involve the spacetime metric
even at infinity.  However  one can attempt  to match the BMS  algebra at infinity for asymptotically flat spacetimes with that generated by
(\ref{invvf0}) and (\ref{invvf1}). This can be done in two different ways. First one can demand that the tangential components of (\ref{invvf0}) or (\ref{invvf1}) along ${\cal I}$ when restricted on ${\cal I}$ generate $\mathfrak{bms}$. Alternatively, one can demand the stronger condition that (\ref{invvf0}) or (\ref{invvf1}) preserve ${\cal I}$ and generate $\mathfrak{bms}$ when restricted on ${\cal I}$. The additional condition required for preserving ${\cal I}$ is that the normal component of $W$ should vanish when restricted on ${\cal I}$, $W^t\vert_{\cal I}=0$, where $t=pu+qv$. The vector fields that generate $\mathfrak{bms}$ on ${\cal I}$ are 
\bea
W^{\mathfrak{bms}}=(a(y) s+f(y)) \partial_s+ C(y)^I\partial_I~,
 \eea
 where $C^I(y)\partial_I$ generate conformal transformations of the sphere at infinity with infinitesimal conformal factor $2a$ and $f$ is an arbitrary function of the sphere at infinity. The tangential components of the vector fields (\ref{invvf0}) generate the $\mathfrak{bms}$ Lie algebra on ${\cal I}$ provided that
\bea
W={1\over p} \Big(  a(y, t)\, s+ f(y, t)+ q W^v(v, y)\Big) \partial_u+ W^v(y,v)\partial_v+ C^I(y)\partial_I~,
\eea
where $a(y)=a(y, t)\vert_{t=0}$, $f(y)=f(y, t)\vert_{t=0}$  and $t=p u+q v$. These vector fields retain their form under Lie brackets, i.e. close under Lie brackets,  provided that $a(y,t)=a(y)$ and $f(y,t)= f(y)$. While if the vector fields $W$ generate   $\mathfrak{bms}$ on ${\cal I}$ and preserve ${\cal I}$, then
\bea
W={1\over2p} \Big( a(y,t) s+ f(y,t)\Big ) \partial_u-{1\over 2q} \Big( a(y) (s-t)+ f(y)\Big ) \partial_v+ C^I(y) \partial_I~.
\eea
These vector fields retain their form under Lie brackets provided that $a(y,t)=a(y)$ and $f(y,t)=f(y)$.

Next turn to the vector fields in
(\ref{invvf1}). The tangential components of (\ref{invvf1}) will generate $\mathfrak{bms}$ on ${\cal I}$ provided that
\bea
W={1\over p}\Big(  a(y,t) \, s+ f(y,t)+q W^v\Big) \partial_u+ W^v \partial_v+ C^I(y) \partial_I~,
\label{invvf3}
\eea
where $W^v=W^v(v)$. These vector fields close under Lie brackets  provided that $a(y,t)=a(y)$ and   $f(y,t)=f(y)$.
It turns out that there do not exist vector fields (\ref{invvf1}) which both generate $\mathfrak{bms}$ on ${\cal I}$ and preserve ${\cal I}$.

\subsection{Symmetries of the integrable $H^+_L(U(k))$ structure}

As in the investigation of symmetries of the $H^+_L(SO(n-2))$-structure, we assume that $\kappa\wedge d\kappa=0$ and adapt coordinates $u,v$
such that $X=\partial_u$ and $\kappa=h dv$.  We have already established that the transformations that preserve either $X$ or both $X$ and $\kappa$
are generated by the vector fields (\ref{invvf0}) and (\ref{invvf1}), respectively.  Next instead of imposing that $W$ also leave invariant the fundamental form $\kappa\wedge \omega$, we impose that the condition that $W$ leaves invariant the complex structure $I$ in $N^\perp/N$, $\check {\cal L}_WI=0$.  Assuming that  there are coordinates $(u,v, z^\alpha, z^{\bar\alpha})$ on the spacetime such that $I^\alpha{}_\beta=-I^{\bar \alpha}{}_{\bar\beta}=i \delta^\alpha{}_\beta$,    a straightforward  computation reveals that the condition $\check{\cal L}_WI=0$ for the vector fields  (\ref{invvf0}) implies
that
\bea
W=W^v( y,v)\partial_v+W^u(u, v, z, \bar z) \partial_u+W^\alpha(v,z) \partial_\alpha+ W^{\bar \alpha}(v,\bar z) \partial_{\bar\alpha}~.
\label{invvf6a}
\eea
while for the vector fields (\ref{invvf1}) implies
\bea
W=W^v(v)\partial_v+W^u(u, v, z, \bar z) \partial_u+W^\alpha(v,z) \partial_\alpha+ W^{\bar \alpha}(v,\bar z) \partial_{\bar\alpha}~.
\label{invvf6}
\eea
It is significant that the components of $W^\alpha\partial_\alpha$ depend only on the holomorphic coordinates $z$.  Therefore
for each $v$ these generate the holomorphic diffeomorphisms on  ${\cal S}$.  The Lie algebra of the above vector fields (\ref{invvf6})
is $\mathfrak{diff}(\bR)\oplus_s(\mathfrak{phol}({\cal S})\oplus_s \mathfrak{conf}(X))$, where $\mathfrak{phol}({\cal S})$ is the Lie algebra of the path group of the holomorphic diffeomorphisms of ${\cal S}$.

A similar argument to that we have used for $H^+_L(SO(n-2))$ reveals that the tangent components of the vector fields  (\ref{invvf6}) along  a Killing horizon hypersurface ${\cal H}$ are
\bea
W^{\cal H}=W^v(v)\partial_v+ W^\alpha(v,z) \partial_\alpha+ W^{\bar \alpha}(v,\bar z) \partial_{\bar\alpha}~,
\eea
while those along an asymptotic null hypersurface ${\cal I}$ are
\bea
W^{\cal I}=W^s(s,z, \bar z)\partial_s+W^\alpha(z) \partial_\alpha+ W^{\bar \alpha}(\bar z) \partial_{\bar\alpha}~.
\eea
For $n>6$-dimensional black holes with ${\cal S}=S^{n-2}$, the above vector field do not generate either the Lorentz group at ${\cal H}$ or
the BMS group at ${\cal I}$.  This is because $S^{2k}$, $k>3$, do not admit (almost) complex structures-the problem for $k=3$ remains open.

However for 4-dimensional black holes, the Lie algebra of $W^{\cal H}$ vector fields is $\mathfrak{diff}(\bR)\oplus_s \mathfrak{psl}(2,\bC)$, where $\mathfrak{psl}(2,\bC)$ is the Lie algebra of the path group of $SL(2,\bC)$.  The latter are the holomorphic transformations of the
2-sphere acting with  M\"obius transformations on the complex coordinate of $\bC P^1=S^2$. As $\mathfrak{sl}(2,\bC)=\mathfrak{so}(3,1)$, the Lorentz Lie algebra is included.  A similar analysis reveals that $W^{\cal I}$
generate the Lie algebra $\mathfrak{pbms}$ where we have used again that $\mathfrak{sl}(2,\bC)=\mathfrak{so}(3,1)$.  

Furthermore one can match the vector fields (\ref{invvf6a}) and  (\ref{invvf6}) with  those generating by the BMS  group at ${\cal I}$. The analysis is similar to that we have presented  in section 5.1 with
\bea
C^I\partial_I=W^\alpha(z) \partial_\alpha+ W^{\bar \alpha}(\bar z) \partial_{\bar\alpha}~.
\label{invvf7}
\eea
Of course one can extend the BMS group by including the singular holomorphic transformations of $S^2$ as in \cite{barnich}.

\section{Conclusions}

We have presented a definition of null G-structures on Lorentzian manifolds which generalizes both the Robinson structure and the null structures which arise
in the context of supersymmetric backgrounds in  supergravity theories.  Then we proceed to explore the geometric properties
of spacetimes admitting such null G-structures utilizing the fundamental forms that characterize  the null G-structures.
We have also examined   the relationship between   null G-structures and  null geodesic congruences.  Then we use this to explore  the geometry of null   hypersurfaces transversal to null geodesic congruences as well as that of the orbit spaces of null geodesics.
We have investigated in more detail the null $H^+_L(K)$-structures associated with the groups  $K=SO(n-2)$, $U(k)$ and $SU(k)$, $n=2k+2$, of a n-dimensional spacetime. We have established that Robinson manifolds admit a  $H^+(U(k))$ structure.
Furthermore, we have examined the symmetries of some null G-structures and demonstrated that interpolate between the symmetries
of Killing horizons and those of null asymptotic infinity. The symmetry algebra of a $H^+_L(K)$-structure for $K=SO(n-2)$,  and $K=U(1)$ and $n=4$, on asymptotically flat spacetimes includes the BMS  algebra.

We have established that many solutions of four and higher dimensional gravitational theories admit a null G-structure. These include the Schwarzchild and  Reissner–Nordstr\"om black holes in four and higher dimensions. It also includes by construction all supersymmetric solutions which
admit a null vector spinor bilinear. It is expected that many more solutions can admit a null G-structure. However to identify which one  would require an extensive search amongst the plethora of solutions that have been constructed specially in the context
of string  and M- theories.

Supersymmetric solutions with a null vector bilinear admit a null G-structure everywhere on the spacetime.  This is because a
 Killing spinor has to be defined everywhere on the spacetime, away from singularities, for a solution to be supersymmetric. In non-supersymmetric solutions though the null G-structures are typically defined on open, but not necessarily small, subsets of the spacetime. It will be of interest to explore the patching conditions
that are required for a null G-structure to be extend across  the whole of spacetime.

The different Gray-Hervella type of  classes of a null $H^+_L(K)$-structure can be  explored  in the same way as those for almost Hermitian manifolds \cite{grayhervella}.  In fact it may be convenient to decompose the covariant derivatives $\nabla\kappa$ and $\nabla\chi$ of the fundamental forms $\kappa$ and $\chi$ in terms of the representations
of the topological structure group $K$ instead of $H^+_L(K)$. Some such calculations in the Lorentzian case have already been done in \cite{jguggp, het}, see also \cite{atc1,atc2}. The various solutions of gravitational theories with a null $H^+_L(K)$-structure will belong
to one of the classes that will arise.

\section*{Acknowledgments}

  I would like to thank  C.~ Bachas for an invitation to visit \'Ecole Normale Superi\'eure,  and  for hospitality and a stimulating environment    to complete this project. I  am partially supported from the  STFC rolling grant ST/P000258/1.



\end{document}